\documentclass[journal]{IEEEtran}  

\usepackage{generic}

\usepackage{graphics} 
\usepackage{epsfig} 
\usepackage{mathrsfs}
\usepackage{times} 
\usepackage{amsmath} 
\usepackage{amssymb}  

\usepackage{multirow}
\usepackage{amsthm}
\usepackage{mathtools}
\usepackage{commath}
\usepackage{graphicx}
\usepackage{algorithm}
\usepackage{algpseudocode}
\usepackage[bookmarks=false]{hyperref}

\newtheorem{theorem}{Theorem}
\newtheorem{lemma}{Lemma}
\newtheorem{corollary}{Corollary}

\theoremstyle{definition}
\newtheorem{definition}{Definition}
\newtheorem{assumption}{Assumption}

\def\equationautorefname~#1\null{(#1)\null}
\def\sectionautorefname~#1\null{#1\null}
\def\subsectionautorefname~#1\null{#1\null}
\def\algorithmautorefname~#1\null{Algorithm~#1\null}
\def\theoremautorefname~#1\null{Theorem~#1\null}
\def\lemmaautorefname~#1\null{Lemma~#1\null}
\def\remarkautorefname~#1\null{Remark~#1\null}
\def\definitionautorefname~#1\null{Definition~#1\null}
\def\assumptionautorefname~#1\null{Assumption~#1\null}

\usepackage{textcomp}
\def\BibTeX{{\rm B\kern-.05em{\sc i\kern-.025em b}\kern-.08em
    T\kern-.1667em\lower.7ex\hbox{E}\kern-.125emX}}
\begin{document}

\title{A Spectral Perspective on Stochastic Control Barrier Functions}
\author{
Inkyu Jang, \IEEEmembership{Graduate Student Member, IEEE},
Chams E. Mballo,
Claire J. Tomlin, \IEEEmembership{Fellow, IEEE}, \\
and
H. Jin Kim, \IEEEmembership{Member, IEEE}
\thanks{Inkyu Jang and H. Jin Kim are with the Department of Aerospace Engineering and Automation and Systems Research Institute, Seoul National University, Seoul, 08826 Republic of Korea.}
\thanks{Chams E. Mballo and Claire J. Tomlin are with the Department of Electrical Engineering and Computer Sciences, University of California, Berkeley, CA 94703 USA.}
\thanks{This work has been submitted to the IEEE for possible publication. Copyright may be transferred without notice, after which this version may no longer be accessible.} 
}

\maketitle

\begin{abstract}
Stochastic control barrier functions (SCBFs) provide a safety-critical control framework for systems subject to stochastic disturbances by bounding the probability of remaining within a safe set. However, synthesizing a valid SCBF that explicitly reflects the true safety probability of the system, which is the most natural measure of safety, remains a challenge. This paper addresses this issue by adopting a spectral perspective, utilizing the linear operator that governs the evolution of the closed-loop system's safety probability. We find that the dominant eigenpair of this Koopman-like operator encodes fundamental safety information of the stochastic system. The dominant eigenfunction is a natural and valid SCBF, with values that explicitly quantify the relative long-term safety of the state, while the dominant eigenvalue indicates the global rate at which the safety probability decays. A practical synthesis algorithm is proposed, termed power-policy iteration, which jointly computes the dominant eigenpair and an optimized backup policy. The method is validated using simulation experiments on safety-critical dynamics models.
\end{abstract}

\begin{IEEEkeywords}
    Stochastic systems, control barrier functions, safety-critical control, operator theory
\end{IEEEkeywords}

\section{Introduction}
\subsection{Motivation}

\IEEEPARstart{S}{afety} is a fundamental requirement in the design and deployment of cyber-physical systems, particularly in safety-critical domains such as robotics and aerospace. In such systems, safety is enforced by prescribing state constraints that must be satisfied at all times, which naturally leads to a set-invariance-based characterization of safety. Control barrier functions (CBFs) provide a principled means of enforcing these constraints by certifying forward invariance of a prescribed safe set~\cite{ames2019control}. If the system is initialized in this safe set, the trajectory is guaranteed to remain within it indefinitely under a CBF-based safety controller.

In many practical applications, however, system dynamics are inherently stochastic due to disturbances and uncertainties such as unmodeled forces and sensing errors. Under such stochastic disturbances, particularly when they are unbounded, one cannot guarantee that the system remains indefinitely in the safe set. Consequently, even under a well-designed safe controller, constraint violations may occur with nonzero probability. This has led to the development of stochastic control barrier functions (SCBFs), which extend classical CBFs to stochastic dynamics. In contrast to deterministic invariance, SCBFs encode safety in probabilistic terms. Specifically, SCBFs characterize safety by lower-bounding the probability of remaining in the safe set over a given time horizon \cite{steinhardt2012finite, santoyo2019verification, gao2020computing, singletary2022safe, jang2025upper}, referred to as the \textit{safety probability}.

CBFs, including SCBFs, are often used as safety monitors that act as supervisory filters for a nominal controller \cite{wabersich2023data}. At each control update, the safety monitor evaluates the current barrier function value and its predicted rate of change under the proposed input to determine whether the system is being driven toward the unsafe region at an excessively fast rate. If so, the monitor intervenes by modifying, often minimally \cite{cbfqp}, the proposed input to obtain a safe alternative.
The alternative input is governed by a \textit{backup policy}, serving as a baseline safe strategy that ensures the existence of such a safe alternative within the control bounds.
For this reason, this safety monitor is often referred to as a \textit{safety filter}.

In this supervisory role, a barrier function is most useful when its value can be directly interpreted as an indicator of system safety. 
For instance, in deterministic systems, safety is determined solely by whether the state lies inside or outside the invariant set; in this setting, the sign of the CBF suffices to distinguish between safe and unsafe states.
In stochastic systems, however, safety is no longer binary, since constraint violation may occur with probability strictly between zero and one.
Accordingly, the value of an SCBF should instead convey how safe a given state is in probabilistic terms, especially because SCBFs are designed to provide a lower bound on the safety probability.

However, mere validity as an SCBF does not imply that the function serves as a consistent proxy for safety. Existing approaches view SCBFs only indirectly through the associated feasibility conditions, without explicitly calibrating the values to the safety probability, which can lead to loose bounds that obscure the true safety margin.
Moreover, because the values of such SCBFs do not necessarily scale monotonically with the actual safety probability, it becomes difficult to interpret them as a consistent safety metric or to meaningfully compare the risk levels across states.

\subsection{Summary of Contributions}

Given the motivation above, this paper addresses the following research question:
\begin{quote}
    How does one characterize an SCBF that best captures the safety properties of a given stochastic system, and how can such an SCBF be computed?
\end{quote}
To answer this, we first formalize a time-dependent Koopman-like linear operator that governs the safety properties of the closed-loop stochastic system modeled as an It\^o diffusion process. This operator naturally arises from the dynamic programming principle that describes the evolution of safety probability of the system.
The dominant (biggest in absolute value) eigenvalue of this operator describes the global rate at which the closed-loop system's safety probability decays over long time horizons, and therefore is an indicator for the system's overall safety.
The corresponding eigenfunction, called the dominant eigenfunction, quantifies relative safety across initial states in terms of long-term safety probability of the closed-loop trajectories. We find that the dominant eigenfunction is a valid and natural choice of SCBF for the system, as it is explicitly connected to the safety probability and therefore is interpretable and meaningful. We show that, under mild assumptions, the dominant eigenfunction is uniquely determined up to scale, depending on the chosen backup policy.

We propose a practical synthesis procedure called \textit{power-policy iteration} that computes jointly the SCBF (i.e., the dominant eigenpair) and the backup policy. Power-policy iteration adds a policy update step to the classic power iteration algorithm, which improves the safety of the backup policy.
It is empirically found that the iteration converges to the safest backup policy among the set of admissible control policies constrained by the input bounds, and the dominant eigenfunction that could be practically deemed the \textit{tightest possible}, i.e., the least conservative SCBF.
We demonstrate the validity of the approach using various system models with process noise and input bounds.

\subsection{Related Work} \label{sec: related work}

\subsubsection{Control Barrier Functions}
The CBF framework has been a popular tool for enforcing invariance of a set in the deterministic setting \cite{ames2019control}. By mapping safety constraints to conditions on the time derivative of a scalar Lyapunov-like function, the framework enables synthesis of safety filters, often implemented as quadratic programs (QPs), called CBF-QP \cite{cbfqp}. This framework has been successfully applied across a wide range of domains, including, but not limited to, robotics \cite{ferraguti2022safety, jang2023safe2} and aerospace \cite{molnar2025collision, shahraki2025spacecraft}.

SCBFs are extensions of CBFs to systems subject to stochastic disturbances \cite{clark2019control}.
While deterministic CBFs certify safety over an infinite time horizon, a valid SCBF provides a lower bound on the probability that the system remains within the safe set. This is typically done by translating the SCBF conditions using martingale inequalities or their variants \cite{steinhardt2012finite, santoyo2019verification, singletary2022safe, jang2025upper, yaghoubi2021risk, cosner2023robust, cosner2024bounding}.
Some works also explored CBFs for uncertain or stochastic systems with respect to different risk measures beyond mere probability, for example, exponential utility \cite{lederer2025risk}, conditional value-at-risk \cite{ahmadi2022risk, kishida2024risk}, or more generally, coherent risk measures \cite{singletary2022safe}.

Despite the theoretical utility of the CBF framework and its extensions, synthesizing a valid barrier function remains a challenge, especially for systems with input bounds. Sum-of-squares (SOS) optimization has been widely used to jointly search for barrier certificates and backup controllers, both in deterministic and stochastic domains \cite{santoyo2019verification, prajna2007framework}. Recent approaches also employ other function approximation methods such as neural networks \cite{liu2023safe, zhang2025stochastic} or Gaussian processes \cite{jagtap2020control, castaneda2021pointwise, khan2022gaussian} to represent barrier functions.
A key limitation in this body of work, especially in the stochastic setting, is the lack of interpretability of the resulting barrier values.

\subsubsection{Hamilton-Jacobi Reachability Analysis}

Hamilton-Jacobi (HJ) reachability analysis offers a constructive way of computing the barrier certificates \cite{bansal2017hamilton, choi2021robust}.
This framework yields a function that serves as both a formal safety certificate and a meaningful measure of safety tied to a certain user-defined cost, typically chosen as the distance to the nearest failure state. This is done by formulating an optimal control problem with respect to the minimum or maximum cost value along the closed-loop trajectory.
Thanks to its close connection to optimal control, HJ reachability admits a wide rage of formulations and extensions, e.g., data-driven approaches \cite{choi2025data}, reinforcement learning \cite{fisac2019bridging, ganai2024hamilton}.
However, this user-designed cost is typically an artificial measure that does not account for the inherent risks of stochastic dynamics, where probability serves as a more natural metric.
Stochastic extensions of HJ reachability methods, such as \cite{althoff2008stochastic, esfahani2016stochastic, assellaou2018hamilton}, address this by formulating the value function as a probability of reaching the set of unsafe states.
Still, they are often unsuitable for infinite-horizon tasks or for designing a stationary safety filter, as the computed probabilities typically decay to zero eventually or become computationally intractable over long horizons.

\subsubsection{Operator-Theoretic Methods in Control}

Operator-theoretic methods, particularly those grounded in the Koopman operator theory \cite{brunton2022modern}, have emerged as a powerful tool for analyzing nonlinear systems by \textit{lifting} the dynamics to an infinite-dimensional space of scalar observables where their evolution becomes linear \cite{korda2018linear, siguenza2025control}. 
This enables the application of linear control techniques to complex nonlinear dynamics.
This capability has been made practically accessible through data-driven methods such as dynamic mode decomposition \cite{schmid2010dynamic} and its extensions \cite{williams2015data, proctor2016dynamic}.
While these methods have also been adapted for safety-critical control \cite{folkestad2021data, zinage2023neural}, it remains relatively under-explored how the intrinsic spectral properties of the operator connect to the safety of the given system, as they primarily utilize the operator for coordinate transformation and trajectory prediction only.

\subsection{Article Organization}
The rest of this paper is organized as follows.
Section~\autoref{sec: preliminaries} provides a brief review of the necessary topics from linear operator theory and introduces CBFs. Section~\autoref{sec: problem formulation} formulates the problem of synthesizing a valid SCBF. We then show in Section~\autoref{sec: spectral approach} that the safety probability of a stochastic system is characterized by a linear operator acting on a space of scalar functions, and in Section~\autoref{sec: eigenfunction as SCBF} that the dominant eigenpair of that linear operator can be used as a valid SCBF. In Section~\autoref{sec: power-policy iteration}, the power-policy iteration algorithm is introduced.
Numerical examples are presented in Section~\autoref{sec: numerical examples} and Section~\autoref{sec: conclusion} concludes the paper.
Proofs for the theorems are given in the appendix.

\section{Preliminaries and Background} \label{sec: preliminaries}

\subsection{Background on Operator Theory}

We begin by briefly introducing the mathematical concepts from functional analysis and linear operator theory that will be used throughout the paper.
Fraktur upper-case letters in this paper denote Banach spaces, i.e., complete normed vector spaces, which are typically infinite-dimensional.

Consider a linear operator $L:\mathfrak{X}\rightarrow \mathfrak{Y}$ between two Banach spaces $\mathfrak{X}$ and $\mathfrak{Y}$. It is said to be \textit{bounded} if there exists some $M > 0$ such that for all $x \in \mathfrak{X}$, $\norm{Lx}_\mathfrak{Y} \leq M\norm{x}_\mathfrak{X}$, where $\norm{\cdot}_\mathfrak{X}$ and $\norm{\cdot}_\mathfrak{Y}$ are Banach space norms. The subscripts hereafter are omitted where evident from the context.
It is known that every bounded linear operator is continuous with respect to the metric given by the norm.
The operator $L$ is said to be \textit{compact} if it maps every bounded subset of $\mathfrak{X}$ to a relatively compact subset of $\mathfrak{Y}$ with compact closure.

Now, consider the case where $\mathfrak{Y} = \mathfrak{X}$, i.e., $L$ is a map from $\mathfrak{X}$ to itself. For a complex number $z \in \mathbb{C}$, let $L_z \coloneqq L - z\cdot \operatorname{Id}$, where $\operatorname{Id}:\mathfrak{X} \rightarrow \mathfrak{X}$ is the identity map.
The \textit{resolvent set} of $L$, $\boldsymbol{\rho}(L)$, is defined as the set of regular values of $L$. Here, $z \in \mathbb{C}$ is a regular value of $L$ if 
\begin{itemize}
    \item $L_z$ is injective,
    \item its range, $L_z(\mathfrak{X})$, is a dense subset of $\mathfrak{X}$, and
    \item its inverse $L_z^{-1}$ defined on $L_z(\mathfrak{X})$ is a bounded operator.
\end{itemize}
The complement of $\boldsymbol{\rho}(L)$, $\boldsymbol{\sigma}(L)\coloneqq \mathbb{C} \setminus \boldsymbol{\rho}(L)$, is called the \textit{spectrum} of $L$.
The \textit{spectral radius} of $L$, $\boldsymbol{r}(L)$ is defined as the radius of the smallest circle on the complex plane centered at the origin, which completely contains all elements of the spectrum, i.e., $\boldsymbol{r}(L) = \sup \{|s| : s \in \boldsymbol{\sigma}(L)\}$.
If $\lambda\in \mathbb{C}$ satisfies $Lx = \lambda x$ for a nonzero $x \in \mathfrak{X}$, $\lambda$ is called an \textit{eigenvalue} of $L$, and $x$ is its corresponding \textit{eigenvector}. Particularly when $\mathfrak{X}$ is a set of functions, $x$ is also called an \textit{eigenfunction}.
For finite-dimensional $L$, the set of eigenvalues is equal to $\boldsymbol{\sigma}(L)$. However, in infinite dimensions, $\boldsymbol{\sigma}(L)$ might contain elements that are not necessarily eigenvalues of $L$.
The converse holds regardless of dimensionality: it can be easily seen that every eigenvalue belongs to the spectrum.

\subsection{CBFs and Safety-Critical Control}

Here, we briefly review CBFs for deterministic systems. Consider a dynamic system described by the ordinary differential equation (ODE)
\begin{equation} \label{eq: deterministic dynamics}
    \dot{x} = f(x,u),
\end{equation}
where $x\in \mathbb{R}^{n_x}$ and $u \in U \subseteq \mathbb{R}^{n_u}$ are the system's state and input, respectively, and $f$ is a Lipschitz continuous function.
Suppose a compact set $C (\neq \varnothing) \subsetneq \mathbb{R}^{n_x}$ is given, which we want the system to not escape from. This could be done using CBFs.

\begin{definition}[Control Barrier Function \cite{ames2019control}] \label{def: cbf}
    For the deterministic dynamics \autoref{eq: deterministic dynamics}, a continuously differentiable function $h:C\rightarrow \mathbb{R}$ is a CBF if it satisfies the regularity condition
    \begin{equation}
        h(x) = 0 \; \Longrightarrow \; \partial_x h(x) \neq 0,
    \end{equation}
    $h(x) \leq 0$ for all $x \in \partial C$, and there exists an extended class $\mathcal{K}$ function\footnote{A continuous function $\alpha:\mathbb{R}\rightarrow \mathbb{R}$ belongs to extended class $\mathcal{K}$ if it is strictly increasing and $\alpha(0)=0$.} $\alpha$ and a feedback policy $\pi: C\rightarrow U$ satisfying
    \begin{equation} \label{eq: cbf conditions}
        \partial_x h(x) \cdot f(x,\pi(x)) + \alpha(h(x)) \geq 0
    \end{equation}
    for all $x\in C$.
\end{definition}
Nagumo's theorem \cite[Section 4.2]{blanchini2008set} says that for any (possibly time-varying) policy $\pi$
which is Lipschitz in state and measurable in time and
satisfies \autoref{eq: cbf conditions}, the closed-loop system will never exit the super-zero level set of $h$, $C_h :=\{x\in C : h(x) \geq 0\}$, as long as it starts in $C_h$, achieving permanent safety within $C$.

The most common and straightforward way of utilizing CBFs in safety-critical control tasks is to employ the optimization-based feedback controller
\begin{equation} \label{eq: cbf-qp}
\begin{aligned}
    \pi(t,x) = \arg \min_{u \in U} \quad & \norm{u - \pi_\mathrm{ref}(t,x)}_W^2 \\
    \mathrm{s.t.} \quad & \partial_x h(x) \cdot f(x, u) + \alpha(h(x)) \geq 0
\end{aligned}
\end{equation}
called CBF-QP (CBF-based quadratic program) \cite{cbfqp}, which actually becomes a QP when the system is input affine and $U$ is an intersection of finite number of halfspaces in $\mathbb{R}^{n_u}$.
In such cases, this optimization can be numerically solved using commercial convex programming solvers at a relatively low computational cost.
Here, $\norm{\cdot}_W$ is a weighted two-norm on $\mathbb{R}^{n_u}$, $\pi_\mathrm{ref}$ is a reference input (with favorable properties so that the resulting $\pi$ would be Lipschitz in $x$ and measurable in $t$) which does not necessarily have to be safe, or even feedback.

The role of CBF-QP is to \textit{filter} the potentially unsafe reference feedback law $\pi_\mathrm{ref}$ and return a safe policy that minimally deviates from the reference in terms of the distance metric $\norm{\cdot}_W$. In this sense, it is called a \textit{safety filter}.

\section{Problem Formulation} \label{sec: problem formulation}

\subsection{Stochastic Safety-Critical Dynamics}

Now, let us consider for the rest of this paper the stochastic dynamics written in the form of an It\^o stochastic differential equation (SDE):
\begin{equation} \label{eq: dynamics}
    dX_t = f(X_t, u_t) dt + \sigma(X_t, u_t) dW_t,
\end{equation}
where $X_t \in \mathbb{R}^{n_x}$, $u_t \in \mathbb{R}^{n_u}$ are the system's state and input at time $t$, $f:\mathbb{R}^{n_x}\times \mathbb{R}^{n_u} \rightarrow \mathbb{R}^{n_x}$ and $\sigma:\mathbb{R}^{n_x} \times \mathbb{R}^{n_u} \rightarrow \mathbb{R}^{{n_x}\times n_w}$ are locally Lipschitz functions, $W_t$ is the standard Brownian motion having $n_w$ channels.

Again, suppose the system is subject to a safety constraint $X_t \in C \neq \varnothing$, where $C \subsetneq \mathbb{R}^{n_x}$ is the set of allowable states.
When $X_t$ first leaves $C$, then the trajectory is terminated with all further behavior of the system being regarded as unsafe.
To represent safety failures, we introduce the notion of \textit{coffin state} $K \notin \mathbb{R}^{n_x}$ and redefine the dynamics as a \textit{killed process} $\{X_t^\dagger\}_{t\geq 0}$.
It follows the original dynamics \autoref{eq: dynamics} only when $X_t \in C$, and whenever the state leaves $C$, it is immediately moved to and permanently stays at $K$:
\begin{equation} \label{eq: killed process}
    X_t^\dagger = \begin{cases}
        X_t & \text{if $X_s \in C$ for all $s \in [0,t]$}, \\
        K & \text{else}.
    \end{cases}
\end{equation}
That is, if $X_s \notin C$ for some $s \geq 0$, then for all $t \geq s$, $X_t^\dagger = K$ regardless of what input is fed into the system.
It can be easily checked that the killed process $X_t^\dagger$ also satisfies the Markov property. 

\begin{assumption} \label{assumption: C}
The safe set $C$ is a compact regular subset of $\mathbb{R}^{n_x}$ with nonempty connected interior and piecewise smooth boundary.
\end{assumption}

In this paper, we are interested in the probability of the closed-loop system not arriving at $K$ so that the trajectory survives in $C$, which is essentially the probability of being safe.
It depends on the feedback law being used and is a function of time and the initial condition. We denote this probability using the notation
\begin{equation} \label{eq: survival probability}
    Z^\pi(t,x) = \mathbb{P}^\pi [X_t^\dagger \neq K | X_0 = x],
\end{equation}
where $\pi$ denotes the (possibly time-varying) feedback law $u_t = \pi(X_t)$ (or $u_t = \pi(t, X_t)$).
It is straightforward to find that $Z^\pi$ should be a nonnegative and non-increasing function of $t$ starting at the initial value $Z^\pi(0, x) = 1_C(x)$, where $1_C$ is the indicator function of the safe set $C$, which returns $1$ if $x \in C$ and $0$ otherwise.

Since the diffusion term $\sigma(X_t, u_t) dW_t$ in the dynamics \autoref{eq: dynamics} is unbounded, there generally does not exist a safe policy $\pi$ which keeps the system within $C$ in the deterministic sense. This implies that, after sufficient time has passed, $Z^\pi$ will eventually decay towards zero.
At the same time, since the diffusion term does not demonstrate any directional intent, there always exists an at least nonzero probability that $dW_t$ acts like a \textit{safe input} in addition to the policy, regardless of which $\pi$ we choose. That is, unlike in the deterministic case where safe states (those contained in an invariant subset of $C$) and unsafe states (those lying outside any such set) are clearly separated, in stochastic systems, safety is generally not a binary property. 
Instead, there is a continuous spectrum of different safety levels, ranging between deterministically unsafe and deterministically safe, across the set $C$.

\subsection{Stochastic Control Barrier Functions}

The SCBF is a straightforward extension of \autoref{def: cbf} using the infinitesimal generator of the stochastic process, which is the stochastic extension of the Lie derivative term $\partial_x h(x) \cdot f(x,\pi(x))$ from \autoref{eq: cbf conditions}.
\begin{definition}[Infinitesimal Generator of a Stochastic Process] \label{def: infinitesimal generator}
    Given a time-invariant backup policy $\pi:C\rightarrow U$, the infinitesimal generator $\mathcal{A}^\pi$ for the closed-loop stochastic system is defined as
    \begin{equation}
        \mathcal{A}^\pi\beta(x) = \lim_{\tau \searrow 0} \frac{\mathbb{E}^\pi [\beta(X_{\tau}) | X_0=x] - \beta(x)}{\tau},
    \end{equation}
    whenever the limit exists.
\end{definition}
In this paper, we also use the notation $\mathcal{A}^u$ with $u\in U$ to denote the infinitesimal generator under the constant open-loop policy $\pi(x) = u$ for all $x$. With mild assumptions such as Lipschitz continuity in $\pi$ and sufficient smoothness of $\beta$, $\mathcal{A}^\pi\beta(x) = \mathcal{A}^{\pi(x)} \beta(x)$ for each $x$.

Now we introduce It\^o's lemma which can be used to evaluate the infinitesimal generator of a function, given that $\beta$ is twice differentiable.
\begin{lemma}[It\^o's Lemma \cite{ito1944stochastic}] \label{lemma: ito}
    Let $X_t$ be an It\^o diffusion process following the SDE
    \begin{equation}
        dX_t = \mu(t, X_t) dt + \Sigma(t, X_t) dW_t
    \end{equation}
    where $\mu$ and $\Sigma$ are measurable functions such that there exists a unique strong solution to the SDE. If $\varphi$ is a twice differentiable function of $t$ and $X_t$, then
    \begin{equation}
        d\varphi(t,X_t) = a(t,X_t) dt + b(t,X_t) dW_t
    \end{equation}
    with
    \begin{equation}
    \begin{aligned}
        a(t,x) &= 
        \begin{multlined}[t]
            \partial_t \varphi(t,x) + \partial_x \varphi(t,x) \mu(t,x) \\
            + \frac{1}{2} \mathrm{Tr}\left(\partial_{xx} \varphi(t,x) \cdot \Sigma(t,x) \Sigma(t,x)^\top \right),
        \end{multlined} \\
        b(t,x) &= \partial_x \varphi(t,x) \cdot \Sigma(t,x).
    \end{aligned}
    \end{equation}
\end{lemma}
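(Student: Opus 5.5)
The approach is the classical one: a second-order Taylor expansion of $\varphi$ along the path, with the second-order stochastic terms handled through the quadratic-variation rules of Brownian motion. Fix $t>0$ and a partition $0=t_0<t_1<\dots<t_N=t$ with mesh $\delta\to 0$. Write the telescoping sum
\begin{equation*}
\varphi(t,X_t)-\varphi(0,X_0)=\sum_{k}\bigl(\varphi(t_{k+1},X_{t_{k+1}})-\varphi(t_k,X_{t_k})\bigr).
\end{equation*}
Expand each increment by Taylor's theorem to second order in $(\Delta t_k,\Delta X_k)$, with $\Delta X_k = X_{t_{k+1}}-X_{t_k}$. This produces five kinds of term:
\begin{itemize}
\item first-order terms $\partial_t\varphi\,\Delta t_k$ and $\partial_x\varphi\,\Delta X_k$;
\item second-order terms $\tfrac12\Delta X_k^\top \partial_{xx}\varphi\,\Delta X_k$;
\item mixed terms $\partial_{tx}\varphi\,\Delta t_k\Delta X_k$ and $\tfrac12\partial_{tt}\varphi\,\Delta t_k^2$;
\item a remainder that is $o(\Delta t_k+|\Delta X_k|^2)$.
\end{itemize}

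Next, substitute $\Delta X_k\approx\mu(t_k,X_{t_k})\Delta t_k+\Sigma(t_k,X_{t_k})\Delta W_k$ and pass to the limit term by term.
\begin{itemize}
\item The first-order sums converge to $\int\partial_t\varphi\,ds+\int\partial_x\varphi\,\mu\,ds+\int\partial_x\varphi\,\Sigma\,dW_s$. The last integral is the Itô integral, obtained as the $L^2$ limit of left-point Riemann sums for adapted integrands.
\item In the second-order sum, the products $\Delta t_k^2$ and $\Delta t_k\Delta W_k$ vanish in the limit, since their sums are $O(\delta)$ and $O(\delta^{1/2})$ respectively.
\item The term $\sum_k \Delta W_k^\top \Sigma^\top\partial_{xx}\varphi\,\Sigma\,\Delta W_k$ converges in $L^2$ to $\int\mathrm{Tr}(\partial_{xx}\varphi\,\Sigma\Sigma^\top)\,ds$.
\end{itemize}

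The key fact behind the last limit is the quadratic variation identity $\Delta W^i_k\Delta W^j_k\approx\delta_{ij}\Delta t_k$, which I would prove by computing the second moment of $\sum_k g_k(\Delta W^i_k\Delta W^j_k-\delta_{ij}\Delta t_k)$ for bounded adapted $g_k$. By independence of increments the cross terms vanish, so this second moment equals $\sum_k E[g_k^2]\,O(\Delta t_k^2)\to 0$.

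Collecting the limits gives the integral form of the stated differential, with $a$ and $b$ exactly as claimed. The remainder is controlled because $|\Delta X_k|^3$ sums to zero in probability, using the modulus of continuity of the paths.

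I expect the main obstacle to be technical: $\mu,\Sigma$ are only measurable and $\varphi$ is only twice differentiable, with no bounds. I would handle this by localization. Introduce stopping times $\tau_n=\inf\{s:|X_s|\ge n\}$, prove the formula for the stopped process, where $\varphi$ and its derivatives are bounded and uniformly continuous on compacts, and then let $n\to\infty$ using the a.s.\ continuity of the strong solution. I would also first assume $\varphi\in C^2$ with bounded derivatives, or approximate $\varphi$ by mollification, and pass to the limit via the dominated convergence theorem for stochastic integrals.

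Alternatively, since this is the classical result cited to \cite{ito1944stochastic}, the proof can simply invoke the standard textbook statement.
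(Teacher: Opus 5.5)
The paper gives no proof of this lemma. It quotes the classical theorem from \cite{ito1944stochastic} and uses it only through the corollary for $\mathcal{A}^u$, so your closing sentence is exactly the paper's route. For continuous coefficients, which is the only case the paper needs ($\mu=f(x,\pi(x))$, $\Sigma=\sigma(x,\pi(x))$), your Taylor-expansion sketch is the standard argument. As a self-contained proof of the statement as written, however, it has a concrete gap. The substitution $\Delta X_k\approx\mu(t_k,X_{t_k})\Delta t_k+\Sigma(t_k,X_{t_k})\Delta W_k$ freezes the coefficients at left endpoints. The limits you then take are Riemann-sum limits, which need regularity of $s\mapsto\mu(s,X_s)$ and $s\mapsto\Sigma(s,X_s)$, and the hypothesis only gives measurability. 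For example, on $[0,1]$ let $n_x=n_w=1$, $\Sigma\equiv 1$, $\mu(t,x)=1_D(t)$ with $D$ the dyadic rationals, and $\varphi(t,x)=x$. The unique strong solution is $X_t=X_0+W_t$, so $\sum_k\Delta X_k=W_1$, yet on every dyadic partition your frozen increments sum to $1+W_1$. Your localization $\tau_n=\inf\{s:|X_s|\ge n\}$ does not fix this. Boundedness does not make Riemann sums converge, and this stopping time does not even bound measurable coefficients; you would also have to stop on $\int_0^s(|\mu|+|\Sigma|^2)\,dr$, which is a.s.\ finite for a strong solution.

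The repair is to freeze only the derivatives of $\varphi$, which are continuous along paths, and keep the increments exact: $\Delta X_k=\int_{t_k}^{t_{k+1}}\mu\,ds+\int_{t_k}^{t_{k+1}}\Sigma\,dW_s$. Then $\sum_k\partial_x\varphi(t_k,X_{t_k})\Delta X_k=\int_0^t\phi^N_s\,\mu\,ds+\int_0^t\phi^N_s\,\Sigma\,dW_s$, where $\phi^N$ is the left-point step version of $\partial_x\varphi(s,X_s)$, and this converges by ordinary and stochastic dominated convergence. For the quadratic term, run your second-moment argument on $D_k=\Delta M^i_k\Delta M^j_k-\int_{t_k}^{t_{k+1}}(\Sigma\Sigma^\top)_{ij}\,ds$ with $M=\int\Sigma\,dW$, instead of on $\Delta W^i_k\Delta W^j_k-\delta_{ij}\Delta t_k$. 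Each $D_k$ has zero conditional mean given $\mathcal{F}_{t_k}$, so the cross terms still vanish. The Burkholder-Davis-Gundy inequality bounds $\mathbb{E}[D_k^2]$ by a multiple of $\mathbb{E}[(\int_{t_k}^{t_{k+1}}|\Sigma|^2\,ds)^2]$, whose sum tends to zero after localization. The drift contributions to $\Delta X^i_k\Delta X^j_k$ vanish because $\int|\mu|\,ds<\infty$ and $M$ and $\int\mu\,ds$ are uniformly continuous on $[0,t]$. Equivalently, follow the textbook route: prove the formula for elementary adapted drift and diffusion processes, where freezing is exact, then approximate. Finally, your remainder estimate needs the second derivatives of $\varphi$ to be uniformly continuous on compacts, i.e.\ $\varphi\in C^2$ (the classical theorem assumes $C^{1,2}$). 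The remainder is then bounded by that modulus of continuity times $\sum_k(\Delta t_k+|\Delta X_k|^2)$, not by $\sum_k|\Delta X_k|^3$. Mollification does not extend the argument to a merely twice-differentiable $\varphi$, so state the $C^2$ hypothesis explicitly.
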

\begin{corollary} \label{cor: ito}
    For the stochastic process \autoref{eq: dynamics}, if $\beta$ is a twice differentiable scalar function,
    \begin{multline} \label{eq: ito}
        \mathcal{A}^u \beta(x) = \partial_x \beta(x) \cdot f(x,u) \\
        + \frac{1}{2} \mathrm{Tr}\left(\partial_{xx} \beta(x) \cdot \sigma(x, u)\sigma(x, u)^\top \right).
    \end{multline}
\end{corollary}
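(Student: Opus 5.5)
The plan is to apply \autoref{lemma: ito} directly to the closed-loop process under the constant input $u$, then take expectations to pass from the stochastic differential of $\beta(X_t)$ to the infinitesimal generator of \autoref{def: infinitesimal generator}. Concretely, with $u_t\equiv u$, the dynamics \autoref{eq: dynamics} take the form required by \autoref{lemma: ito} with $\mu(t,x)=f(x,u)$ and $\Sigma(t,x)=\sigma(x,u)$. These coefficients are locally Lipschitz, so a unique strong solution exists at least up to the exit time from a neighborhood of $x$. I will take $\varphi(t,x)=\beta(x)$, so that $\partial_t\varphi=0$. \autoref{lemma: ito} then gives
\begin{equation*}
\beta(X_\tau)-\beta(X_0)=\int_0^\tau a(X_s)\,ds+\int_0^\tau \partial_x\beta(X_s)\sigma(X_s,u)\,dW_s,
\end{equation*}
where $a(x)$ is exactly the right-hand side of \autoref{eq: ito}.

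Next I take the conditional expectation given $X_0=x$. The key point is that the stochastic integral should have zero mean. Because the coefficients are only locally Lipschitz and $\beta$ is not assumed bounded, I will localize. Let $\theta$ be the exit time of $X_t$ from a ball $B$ around $x$, and work with $\tau\wedge\theta$. On $B$ the integrand $\partial_x\beta\,\sigma$ is bounded, since $\beta$ is $C^2$ (I will read ``twice differentiable'' as $C^2$). Hence the stopped stochastic integral is a true martingale with zero expectation. This yields Dynkin's formula:
\begin{equation*}
\mathbb{E}[\beta(X_{\tau\wedge\theta})]-\beta(x)=\mathbb{E}\int_0^{\tau\wedge\theta}a(X_s)\,ds.
\end{equation*}

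Finally, I divide by $\tau$ and let $\tau\searrow 0$. By continuity of the sample paths, $a(X_s)\to a(x)$ as $s\to0$, and $a$ is bounded on $B$, so dominated convergence gives that the right-hand side divided by $\tau$ tends to $a(x)$. It remains to show that replacing $\tau\wedge\theta$ by $\tau$ changes the numerator by $o(\tau)$. This follows because $\mathbb{P}(\theta\le\tau)$ decays faster than any power of $\tau$ for a diffusion with bounded coefficients near $x$, provided $\beta$ has at most polynomial growth. I expect this localization and exchange of limits to be the main obstacle. I will handle it either by the exit-time tail estimate just described or, more simply, by noting that only the values of $\beta$ near $x$ matter and modifying $\beta$ outside $B$ to be compactly supported. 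With compact support, $\mathbb{E}$ and the limit commute directly, and the limit equals $a(x)$, which is \autoref{eq: ito}.
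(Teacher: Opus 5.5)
Your proposal is correct and follows the route the paper intends. The paper gives no separate proof and presents the corollary as an immediate consequence of \autoref{lemma: ito}, and your argument (It\^o's lemma under the constant input $u$, localization to obtain Dynkin's formula, then dividing by $\tau$ and letting $\tau\searrow 0$) is the standard way to make that precise. One caution: the ``modify $\beta$ outside $B$'' shortcut is not independent of the exit-time estimate, because showing that the modification leaves $\mathcal{A}^u\beta(x)$ unchanged requires $\mathbb{E}\bigl[|\beta-\tilde\beta|(X_\tau)\bigr]=o(\tau)$. That needs the same bound $\mathbb{P}(\theta\le\tau)=o(\tau)$ proved in the paper's proof of \autoref{thm: inf_gen of T}, together with an integrability condition on $\beta(X_\tau)$ that the statement leaves implicit (e.g., bounded $\beta$, or polynomial growth plus moment bounds on $X_\tau$).
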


SCBFs for It\^o diffusion processes are defined in many different ways across the literature, although they share the similar structure and can sometimes be converted from one to another. In this paper, we use the following \textit{zeroing-type} definition for SCBFs that was used in  \cite{clark2019control}.
\begin{definition}[Stochastic CBF] \label{def: scbf}
    A twice continuously differentiable function $h:C\rightarrow \mathbb{R}$ is an SCBF if there exists a Lipschitz continuous backup policy $\pi_b:C\rightarrow U$ such that
    \begin{align}
        h(x) \leq 0 & \quad \forall x \in \partial C, \\
        \mathcal{A}^{\pi_b} h(x) + \gamma h(x) \geq 0 &\quad \forall x \in \operatorname{Int} C, \label{eq: scbf condition}
    \end{align}
    where $\gamma \geq 0$ is a pre-specified parameter which we call the \textit{SCBF decay rate}.
\end{definition}

While a deterministic CBF provides control invariance guarantee for its super zero level set, an SCBF provides a lower bound of probability of the system not escaping from the set $C$. Specifically, a Lipschitz continuous policy satisfying \autoref{eq: scbf condition} for an SCBF $h$ can constrain the system's state $X_t$ within $C$ at a probability at least higher than one exponentially decaying at the rate of $\gamma$.
\begin{theorem}[Safety Probability Lower Bound] \label{thm: scbf prob bound}
    If $h$ is an SCBF satisfying \autoref{eq: scbf condition} and $\pi$ is a possibly time-varying Lipschitz continuous control policy that
    \begin{equation} \label{eq: scbf condition time-varying}
        \mathcal{A}^{\pi(t,x)} h(x) + \gamma h(x) \geq 0 
    \end{equation}
    for all $t \geq 0$ and $x\in C$, then
    \begin{equation}
        Z^\pi(t,x) \geq \frac{h(x)}{h_0}\cdot e^{-\gamma t},
    \end{equation}
    where $h_0 = \max_{x\in C} h(x)$ which is guaranteed to exist if $C$ is a compact set.
\end{theorem}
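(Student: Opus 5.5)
The plan is to apply It\^o's lemma (\autoref{lemma: ito}) to the time-weighted process $\varphi(t,X_t) = e^{\gamma t} h(X_t)$, stopped at the first exit time from $C$, and then use optional stopping together with the boundary condition $h \le 0$ on $\partial C$.

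Let $\tau = \inf\{s \ge 0 : X_s \notin \operatorname{Int} C\}$, and fix $x \in \operatorname{Int} C$. (If $x \in \partial C$, then $h(x) \le 0$ and the bound is trivial since $Z^\pi \ge 0$.) By \autoref{lemma: ito} and \autoref{cor: ito}, for $s < \tau$,
\begin{equation*}
d\bigl(e^{\gamma s}h(X_s)\bigr) = e^{\gamma s}\bigl(\mathcal{A}^{\pi(s,X_s)}h(X_s) + \gamma h(X_s)\bigr)\,ds + e^{\gamma s}\partial_x h(X_s)\,\sigma\,dW_s .
\end{equation*}
By \autoref{eq: scbf condition time-varying}, the drift is nonnegative. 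Since $h$ is $C^2$ on the compact set $C$ and $\sigma$ is continuous, the integrand of the stochastic integral is bounded on $[0, t\wedge\tau]$. Hence the stopped stochastic integral is a true martingale with zero mean. Taking expectations gives
\begin{equation*}
\mathbb{E}\bigl[e^{\gamma (t\wedge\tau)} h(X_{t\wedge\tau})\bigr] \ge h(x).
\end{equation*}

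Next, split the expectation on the events $\{\tau > t\}$ and $\{\tau \le t\}$. On $\{\tau \le t\}$ we have $X_\tau \in \partial C$ by path continuity, so $h(X_\tau) \le 0$ and that contribution is nonpositive. On $\{\tau > t\}$ the integrand is $e^{\gamma t}h(X_t) \le e^{\gamma t} h_0$. Therefore
\begin{equation*}
h(x) \le e^{\gamma t} h_0\, \mathbb{P}[\tau > t] \le e^{\gamma t} h_0\, Z^\pi(t,x).
\end{equation*}
Here $\{\tau > t\} \subseteq \{X_t^\dagger \neq K\}$. Dividing by $h_0 e^{\gamma t}$ gives the claim when $h_0 > 0$. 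If $h_0 \le 0$, then $h \le 0$ everywhere; in that case the statement is read in the natural sense that it is only meaningful when $h_0>0$, i.e.\ the inequality is vacuous or needs $h_0>0$.

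The main technical obstacle is justifying It\^o's lemma and the martingale property up to the exit time. The process is only locally defined: $f$ and $\sigma$ are locally Lipschitz, and $h$ is defined only on $C$. I would handle this by localization. Extend $h$ to a $C^2$ function on a neighborhood of $C$, or work with stopping times $\tau_n$ at which $X$ first reaches distance $1/n$ from $\partial C$. On $C$ the coefficients are bounded, so a unique strong solution exists up to $\tau$ and the stochastic integrals are genuine martingales. Then let $n\to\infty$ using bounded convergence, which applies because $h$ is bounded on $C$.

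A secondary point is that condition \autoref{eq: scbf condition time-varying} holds pointwise with the time-varying feedback. This makes \autoref{cor: ito} applicable with $\mu(t,x)=f(x,\pi(t,x))$ and $\Sigma(t,x)=\sigma(x,\pi(t,x))$, which is exactly the form covered by \autoref{lemma: ito}.
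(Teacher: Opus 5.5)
Your proof is correct, but it is not the paper's route. The paper gives no derivation: it calls the bound an immediate corollary of Proposition~1 of Santoyo et al., which rests on Kushner's supermartingale inequality. You instead prove it from first principles. It\^o's lemma makes $e^{\gamma s}h(X_s)$ a submartingale up to the exit time $\tau$ from $\operatorname{Int} C$. Optional stopping at $t\wedge\tau$ then gives $\mathbb{E}[e^{\gamma(t\wedge\tau)}h(X_{t\wedge\tau})]\ge h(x)$, and the sign condition $h\le 0$ on $\partial C$ disposes of the event $\{\tau\le t\}$. Both arguments use the same martingale mechanism. To extract the \emph{exponential} bound from a supermartingale maximal inequality, one would apply it to the nonnegative space-time supermartingale $h_0e^{\gamma t}-e^{\gamma s}h(X_s)$, $s\in[0,t\wedge\tau]$, which is at least $h_0e^{\gamma t}$ at exit. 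Your argument needs neither a maximal inequality nor nonnegativity, only the value at one stopping time. The citation buys brevity. Your version buys a self-contained proof that handles the time-varying policy and the localization near $\partial C$ explicitly.

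Your route also exposes the implicit hypothesis $h_0>0$, and here your remark is too mild. For $h_0<0$ the bound is not vacuous but can be false. For example, $h\equiv -1$ with $\gamma=0$ meets the SCBF definition, yet the theorem would assert $Z^\pi(t,x)\ge 1$, which fails for $t>0$ under nondegenerate noise. Your own inequality $h(x)\le e^{\gamma t}h_0\,\mathbb{P}[\tau>t]$ shows why: it becomes an upper bound when $h_0<0$. So $h_0>0$ should be stated as an assumption, not treated as a matter of interpretation.
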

\begin{proof}
It is an immediate corollary of \cite[Proposition 1]{santoyo2019verification} which is based on \cite[Chapter 3, Theorem 1]{kushner1967stochastic}.
\end{proof}

Similar to CBF-QP \autoref{eq: cbf-qp}, a straightforward way of building a policy that satisfies the SCBF condition \autoref{eq: scbf condition time-varying} is to formulate the optimization-based controller
\begin{equation} \label{eq: scbf-qp}
\begin{aligned}
    \pi(t,x) = \arg \min_{u \in U} \quad & \norm{u - \pi_\mathrm{ref}(t,x)}_W^2 \\
    \mathrm{s.t.} \quad & \mathcal{A}^u h(x) + \gamma h(x) \geq 0,
\end{aligned}
\end{equation}
where $\norm{\cdot}_W$ is, again, a weighted two-norm on $\mathbb{R}^m$ and $\pi_\mathrm{ref}$ is the reference input which is not necessarily safe. If $\sigma$ does not explicitly depend on the input and the drift part $f(\cdot, \cdot)$ is affine in input, then \autoref{eq: scbf-qp} becomes a QP, namely SCBF-QP, which can be solved at a relatively low computational cost.

\section{The Spectral Approach} \label{sec: spectral approach}

\subsection{Dynamic Programming for Safety Probability}

Suppose a time-invariant policy $\pi:C\rightarrow U$ is given as a Lipschitz continuous function of state.
We will derive the dynamic programming principles for $Z^\pi(t,x)$ from \autoref{eq: survival probability} to formulate a PDE which we can solve to evaluate $Z^\pi$.
Firstly, as mentioned above the initial condition for $Z^\pi$ is written as
\begin{equation}
    Z^\pi(0,x) = 1_C(x),
\end{equation}
where $1_C$ is the indicator function of $C$.
In addition, since the trajectory can never recover from the coffin state $K$,
\begin{equation}
    Z^\pi(t,x) = 0
\end{equation}
for all $x \notin C$ and $t \geq 0$, which will work as a boundary condition later.

To obtain the recursive relation, we start from
\begin{equation} \label{eq: dp1}
    Z^\pi (t+s,x) = \mathbb{P}^\pi[X^\dagger_{t+s} \neq K | X_0 = x]
\end{equation}
with $t \geq 0$, $s \geq 0$, which we can break into
\begin{equation} \label{eq: dp}
\begin{aligned}
    Z^\pi (t+s, x) &= \mathbb{P}^\pi [X^\dagger_{t+s} \neq K | X_0 = x] \\
    &= \mathbb{E}^\pi_{X_s}\left[\mathbb{P}^\pi [X^\dagger_{t+s} \neq K | X_0=x, X_s]\middle|X_0=x\right] \\
    &= \mathbb{E}^\pi_{X_s}\left[\mathbb{P}^\pi [X^\dagger_{t+s} \neq K | X_s]\middle|X_0=x\right] \\
    &= \begin{cases}
        \mathbb{E}^\pi_{X_s}\left[Z^\pi(t, X_s^\dagger) \middle| X_0=x\right] & (x \in C) \\
        0 & (x \notin C),
        \end{cases}
\end{aligned}
\end{equation}
which serves as the dynamic programming principle for $Z^\pi$.
The law of total probability was used to obtain the second equality, and then the Markov property was used to go from the second line to the third.

This dynamic programming principle can be described using the operator $T^\pi_t$ defined as follows:
\begin{equation} \label{eq: T definition}
    T^\pi_t \beta(x) := \begin{cases}
        \mathbb{E}^\pi_{X_t} [ \beta(X_t^\dagger) | X_0=x ] & (x \in C) \\
        0 & (x \notin C),
    \end{cases}
\end{equation}
where $t \geq 0$ is the time horizon, $\beta$ could be any measurable scalar function defined on $\mathbb{R}^{n_x}$ such that $\beta(x) = 0$ for all $x \notin C$,
with which we can rewrite \autoref{eq: dp} as
\begin{equation}
    Z^{\pi}(t+s, \cdot) = T_t^\pi Z^\pi(s, \cdot),
\end{equation}
and thus $Z^\pi(t,\cdot) = T_t^\pi 1_C(\cdot)$.
By studying its properties, one can capture the characteristics of how the safety probability $Z^\pi$ evolves.

The operator $T_t^\pi$ could be treated as a map from  the domain $\mathfrak{D}_0$ to itself, i.e., $T_t^\pi \beta \in \mathfrak{D}_0$ for all $\beta \in \mathfrak{D}_0$, where $\mathfrak{D}_0$ is defined as 
\begin{equation}
    \mathfrak{D}_0 \coloneqq \left\{\beta: C \rightarrow \mathbb{R} : \text{$\beta$ is continuous on $C$.}\right\}.
\end{equation}
The domain $\mathfrak{D}_0$ has the Banach space structure with respect to the supremum norm $\norm{\cdot}$ defined as
\begin{equation} \label{eq: banach norm}
    \norm{\beta} \coloneqq \max_{x \in C} |\beta(x)|.
\end{equation}
Note that the maximum always exists because $C$ is assumed to be a compact set.
Additionally, $T_t^\pi$ is an integral operator that can be written in the form
\begin{equation} \label{eq: T integral}
    T_t^\pi \beta (x) = \int_C \beta(y) P_t^\pi(x,y)dy,
\end{equation}
where the measure $P_t^\pi(x,\cdot)$ (i.e., the pushforward measure) denotes the transition probability for the killed process under the policy $\pi$, representing the density of $X_t^\dagger$ given $X_0=x \in C$.

At this point, we introduce an important assumption:
\begin{assumption} \label{assumption: full_rank}
    For each $x \in C$, regardless of $u$, $\sigma(x,u)$ has full row rank.
\end{assumption}
\autoref{assumption: full_rank} implies that the pushforward measure $P_t^\pi(\cdot,\cdot)$ for $t > 0$ is continuous and positive everywhere in $\operatorname{Int} C \times \operatorname{Int} C$.
While this might not hold for all stochastic systems of interest, especially with noises entering only a subset of coordinates, we will later empirically see that this could be relaxed.

\subsection{Spectral Analysis} \label{sec: spectral analysis on T}

In the following theorem, we summarize the useful properties of the operator $T_t^\pi$ that we use later.
\begin{theorem}[Properties of $T_t^\pi$] \label{thm: T properties}
\;
\begin{enumerate}
    \item $T_t^\pi$ is a non-expansive positive linear operator.\footnote{$L:\mathfrak{X} \rightarrow \mathfrak{X}$ is a positive operator if $\beta \geq 0 \rightarrow L\beta \geq 0$, where $\beta\geq 0$ means $\beta(x) \geq 0$ for all $x$. It is non-expansive if $\norm{L\beta} \leq \norm{\beta}$ for all $\beta \in \mathfrak{X}$.}
    \item If $t > 0$, $T_t^\pi$ is a compact operator.
    \item The collection $T^\pi \coloneqq \{T_t^\pi\}_{t \geq 0}$ forms a semigroup, i.e., $T_0^\pi = \operatorname{Id}$ and $T_{t+s}^\pi = T_t^\pi \circ T_s^\pi$ for all $t, s \geq 0$.
\end{enumerate}
\end{theorem}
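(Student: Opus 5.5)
We prove the three items in the order (1), (3), (2), working from the probabilistic definition \autoref{eq: T definition} and the kernel representation \autoref{eq: T integral}.

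For item~1, linearity is immediate from linearity of conditional expectation. For $\beta\in\mathfrak{D}_0$ extended by zero outside $C$, we write $T_t^\pi\beta(x)=\mathbb{E}^\pi[\beta(X_t)1\{X_s\in C\ \forall s\le t\}\mid X_0=x]$. Positivity follows because the integrand is nonnegative whenever $\beta\ge0$. Non-expansiveness follows from
\begin{equation*}
|T_t^\pi\beta(x)|\le \norm{\beta}\,\mathbb{P}^\pi[X_t^\dagger\neq K\mid X_0=x]=\norm{\beta}Z^\pi(t,x)\le\norm{\beta}.
\end{equation*}
We also need $T_t^\pi$ to map $\mathfrak{D}_0$ into itself, i.e.\ that $T_t^\pi\beta$ is continuous on $C$. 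We would take this from the paper's standing assertion, but justify it as follows. In the interior, the transition kernel $P_t^\pi(x,y)$ depends continuously on $x$ (Feller property of the Lipschitz closed-loop SDE, together with the uniform ellipticity from \autoref{assumption: full_rank}). At $\partial C$, the value $0$ is attained continuously because \autoref{assumption: C} (piecewise smooth boundary) combined with nondegenerate noise makes boundary points regular for the exit time. Hence $\mathbb{P}_x[\tau_C\le t]\to1$ as $x\to\partial C$ for every $t>0$. When $t=0$, the operator is simply the identity.

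For item~3, $T_0^\pi=\operatorname{Id}$ because $X_0^\dagger=x$ for $x\in C$. The identity $T_{t+s}^\pi=T_t^\pi T_s^\pi$ repeats the dynamic-programming computation \autoref{eq: dp} with a general $\beta$ in place of $1_C$. Concretely, we condition on $\mathcal{F}_t$ and apply the (strong) Markov property of the killed process at time $t$. Since $K$ is absorbing and $\beta(K)=0$, the contribution from killed paths vanishes, which gives
\begin{equation*}
\mathbb{E}[\beta(X^\dagger_{t+s})\mid X_0=x]=\mathbb{E}[T_s^\pi\beta(X_t^\dagger)\mid X_0=x].
\end{equation*}
This is exactly $T_t^\pi(T_s^\pi\beta)(x)$. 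Time-homogeneity holds because $\pi$ is time-invariant.

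Item~2 is the main obstacle. We plan to use the Arzelà--Ascoli theorem on the compact set $C$. The image of the unit ball $\{\norm{\beta}\le1\}$ is uniformly bounded by item~1, so it suffices to show equicontinuity. For this we write
\begin{equation*}
|T_t^\pi\beta(x)-T_t^\pi\beta(x')|\le\int_C|P_t^\pi(x,y)-P_t^\pi(x',y)|\,dy,
\end{equation*}
and show that the map $x\mapsto P_t^\pi(x,\cdot)$ is continuous into $L^1(C)$, uniformly on $C$ since $C$ is compact. The intended route is the semigroup property: factor $T_t=T_{t/2}T_{t/2}$ and use the interior continuity and positivity of the density guaranteed by \autoref{assumption: full_rank}. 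This strong Feller property yields total-variation continuity of $x\mapsto P_t^\pi(x,\cdot)$ on the interior. At the boundary, the regularity argument from item~1 gives $\int_C P_t^\pi(x,y)\,dy=Z^\pi(t,x)\to0$ uniformly as $x\to\partial C$, which handles equicontinuity near $\partial C$. Combining the two regimes gives equicontinuity of the image set, and Arzelà--Ascoli then yields relative compactness, so $T_t^\pi$ is compact for $t>0$. If a cleaner reference is preferred, we would instead invoke standard parabolic estimates (e.g.\ Hölder continuity of Dirichlet heat kernels for uniformly elliptic operators with Lipschitz coefficients on piecewise smooth domains), which give equicontinuity directly.
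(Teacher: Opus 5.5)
Your proposal is correct and follows essentially the same route as the paper. Non-expansiveness comes from $|T_t^\pi\beta(x)|\le\norm{\beta}Z^\pi(t,x)$, the semigroup law from the Markov property of the killed process, and compactness from Arzel\`a--Ascoli via the bound $\norm{\beta}\int_C|P_t^\pi(x,y)-P_t^\pi(x',y)|\,dy$.

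You also add two checks that the paper's proof skips, since it only invokes continuity of $P_t^\pi$ on $\operatorname{Int}C\times\operatorname{Int}C$. First, you verify that $T_t^\pi\beta$ is actually continuous on $C$. Second, you handle equicontinuity near $\partial C$ through $Z^\pi(t,x)\to0$ uniformly as $x\to\partial C$, by boundary regularity as in \autoref{thm: bdry}, whose proof does not depend on this theorem.
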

\begin{proof}
    See Appendix \autoref{proof: T properties}.
\end{proof}

Recall that the goal of SCBFs is to obtain probabilistic safety guarantees over long time horizons. Thanks to the semigroup property, if one could compute the operator $T_t^\pi$ for a fixed horizon $t>0$, then for any integer multiple of $t$, $nt$, it is also possible to evaluate the safety probability at time $nt$ by applying the same operator repeatedly to the initial condition as follows:
\begin{equation}
\begin{aligned}
    Z^\pi (nt, x) &= T_{nt}^\pi 1_C(x) \\
    &= (T_t^\pi)^n 1_C(x)= \underbrace{T_t^\pi\circ\cdots\circ T_t^\pi}_{\text{$n$ times}} 1_C (x).
\end{aligned}
\end{equation}
If $T_t^\pi$ has a unique dominant eigenvalue that has multiplicity $1$ and is separated from all other eigenvalues by a nonzero spectral gap (which we will find true later), then one can expect the following approximation to work:
\begin{equation} \label{eq: prob approx}
\begin{aligned}
    Z^\pi(nt, x) &= (T_t^\pi)^n 1_C (x)  \\
    &= c\cdot (r_t^\pi)^n \psi_t^\pi (x) + o\left((r_t^\pi)^n\right), \quad \forall n \in \mathbb{N},
\end{aligned}
\end{equation}
or further,
\begin{equation} \label{eq: prob approx 2}
    Z^\pi(\tau,x) = c \cdot (r_t^\pi)^{\tau / t} \psi_t^\pi (x) + o\left((r_t^\pi)^{\tau / t}\right) \quad \forall \tau \geq 0,
\end{equation}
where $c$ is a constant that does not depend on $n$ or $\tau$, $r_t^\pi$ is the dominant eigenvalue of $T_t^\pi$, $\psi_t^\pi$ is its dominant eigenfunction, and $o(\cdot)$-s are the residual terms decaying faster than the argument $(\cdot)$ as $n$ or $t$ grows.
This well motivates the study on the spectral properties of $T_t^\pi$, especially regarding its dominant eigenpair.

In the following theorems, we see that the above conjecture is actually valid under some mild assumptions.
We show that $T_t^\pi$ has a unique real dominant eigenvalue $r_t^\pi \in (0,1]$ which equals the spectral radius of $T_t^\pi$, and a positive real dominant eigenfunction $\psi_t^\pi$ satisfying
\begin{equation} \label{eq: dominant eigenpair for T_t}
    T_t^\pi \psi_t^\pi (x) = r_t^\pi \psi_t^\pi (x)
\end{equation}
for all $x \in C$.
Considering the approximation \autoref{eq: prob approx}, this dominant eigenpair is especially meaningful: The dominant eigenfunction $\psi_t^\pi$ quantitatively evaluates how safe a given initial point is over a long horizon relative to other initial states, and the dominant eigenvalue $r_t^\pi$ provides a holistic safety measure for the overall closed-loop system through the lens of safety probability.

\begin{theorem} \label{thm: spectral radius}
    Given $t> 0$, the spectral radius of $T_t^\pi$, $r_t^\pi = \boldsymbol{r}(T_t^\pi)$, is strictly positive but not greater than $1$ and is an eigenvalue of $T_t^\pi$ whose corresponding eigenfunction $\psi_t^\pi\in \mathfrak{D}_0$ being nonnegative everywhere.
\end{theorem}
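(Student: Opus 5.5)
The proof will combine the Krein--Rutman theorem with the properties collected in \autoref{thm: T properties}. The Banach space $\mathfrak{D}_0$ with the supremum norm contains the cone $\mathfrak{K} = \{\beta \in \mathfrak{D}_0 : \beta \geq 0\}$ of nonnegative continuous functions. This cone is closed and convex, satisfies $\mathfrak{K}\cap(-\mathfrak{K}) = \{0\}$, and is total, since every continuous function is the difference of its positive and negative parts. By \autoref{thm: T properties}, $T_t^\pi$ is positive, so it leaves $\mathfrak{K}$ invariant, and for $t>0$ it is compact. The Krein--Rutman theorem therefore says that if $\boldsymbol{r}(T_t^\pi) > 0$, then $\boldsymbol{r}(T_t^\pi)$ is an eigenvalue with an eigenvector in $\mathfrak{K}\setminus\{0\}$. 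That eigenvector is exactly a nonnegative eigenfunction $\psi_t^\pi\in\mathfrak{D}_0$. So the whole argument reduces to two bounds, $0 < r_t^\pi \le 1$.

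\textbf{Upper bound.} Non-expansiveness from \autoref{thm: T properties} gives $\norm{T_t^\pi}\le 1$. Since the spectral radius never exceeds the operator norm (Gelfand), $\boldsymbol{r}(T_t^\pi)\le 1$.

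\textbf{Strict positivity.} This is the step I expect to be the main obstacle, because a compact operator can be quasi-nilpotent. I will use positivity of the kernel in \autoref{eq: T integral} under \autoref{assumption: full_rank}. Fix a closed ball $B\subset \operatorname{Int} C$ and choose a continuous bump $\phi\in\mathfrak{K}$ with $\phi = 1$ on $B$, $0\le \phi\le 1$, and support inside $\operatorname{Int} C$. The transition density $P_t^\pi$ is continuous and positive on $\operatorname{Int} C\times \operatorname{Int} C$, so it is bounded below by some $m>0$ on $\operatorname{supp}\phi\times B$. Hence for $x\in\operatorname{supp}\phi$,
\begin{equation*}
T_t^\pi\phi(x)\ \ge\ \int_B P_t^\pi(x,y)\,dy\ \ge\ m\,|B| \eqqcolon \delta >0 .
\end{equation*}
Because $\phi\le 1$, this gives the pointwise inequality $T_t^\pi\phi\ge \delta\phi$ on all of $C$. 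Iterating and using positivity yields $(T_t^\pi)^n\phi\ge\delta^n\phi$, so $\norm{(T_t^\pi)^n}\ge\delta^n$. By Gelfand's formula, $\boldsymbol{r}(T_t^\pi)=\lim_n\norm{(T_t^\pi)^n}^{1/n}\ge\delta>0$.

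One subtlety remains: $T_t^\pi\beta$ must actually lie in $\mathfrak{D}_0$, including continuity up to $\partial C$. I take this as part of \autoref{thm: T properties}. The regularity of $C$ from \autoref{assumption: C} should make $T_t^\pi\beta$ vanish continuously at $\partial C$, which is consistent with the killing.
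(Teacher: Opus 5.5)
Your proposal is correct and has the same skeleton as the paper's proof: Krein--Rutman on the cone of nonnegative continuous functions gives the eigenvalue and a nonnegative eigenfunction, and non-expansiveness ($\norm{T_t^\pi}\le 1$) gives $r_t^\pi\le 1$. The genuine difference is the strict-positivity step.

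The paper attributes $r_t^\pi>0$ to Riesz--Schauder and Krein--Rutman, but neither result delivers it:
\begin{itemize}
\item Riesz--Schauder is compatible with $\boldsymbol{\sigma}(T_t^\pi)=\{0\}$.
\item The usual Krein--Rutman theorem takes $\boldsymbol{r}>0$ as a hypothesis.
\item The strong version, which does yield $\boldsymbol{r}>0$, requires $T_t^\pi$ to map nonzero nonnegative functions into the interior of the cone. That cannot be applied directly in $\mathfrak{D}_0$ or $\mathfrak{D}$ with the sup norm. By \autoref{thm: bdry}, $T_t^\pi\beta$ vanishes on $\partial C$, so it never lies in the interior of the cone in $\mathfrak{D}_0$, and the cone in $\mathfrak{D}$ has empty interior.
\end{itemize}

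Your sub-eigenfunction inequality $T_t^\pi\phi\ge\delta\phi$ supplies exactly the missing ingredient. You obtain it from the positivity and continuity of the kernel under \autoref{assumption: full_rank}, propagate it to $(T_t^\pi)^n\phi\ge\delta^n\phi$ by positivity, and convert it via Gelfand's formula. This makes your argument the more complete of the two. It also yields an explicit bound $r_t^\pi\ge m|B|$, hence $\gamma^\pi\le -\tfrac{1}{t}\log(m|B|)$, which the paper's one-line appeal does not provide.

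What the paper's version buys is only brevity. The price of yours is an explicit reliance on the kernel being positive on $\operatorname{Int} C\times\operatorname{Int} C$, which implicitly uses the connected interior from \autoref{assumption: C}. The paper asserts this fact but does not prove it, and your proof inherits that dependence.
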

\begin{proof}
    See Appendix~\autoref{proof: spectral radius}
\end{proof}

\begin{theorem} \label{thm: uniqueness, spectral gap}
    Given $t > 0$, the dominant eigenfunction is unique up to scalar multiplication. Moreover, the operator $T_t^\pi$ has nonzero spectral gap, i.e., there exists a positive scalar $\rho < r_t^\pi$ such that all eigenvalues other than $r_t^\pi$ are smaller than $\rho$ in their absolute values.
\end{theorem}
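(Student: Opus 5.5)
The plan is to use the strong positivity that \autoref{assumption: full_rank} gives and then apply a Krein--Rutman / Jentzsch type argument for compact, strongly positive operators on $\mathfrak{D}_0$. Since $T_t^\pi$ is compact for $t>0$ (\autoref{thm: T properties}), every nonzero point of its spectrum is an isolated eigenvalue of finite algebraic multiplicity, and the only accumulation point is $0$. So once we show that no other eigenvalue lies on the circle $|\lambda| = r_t^\pi$, the spectral gap follows automatically: $\boldsymbol{\sigma}(T_t^\pi)\setminus\{r_t^\pi\}$ meets $\{|\lambda|\ge r_t^\pi/2\}$ in at most finitely many points, so the maximum of their moduli is some $\rho' < r_t^\pi$. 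We then take $\rho$ strictly between $\rho'$ and $r_t^\pi$.

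The first step is strong positivity, via the integral kernel \autoref{eq: T integral}. For $t>0$ the density $P_t^\pi(x,y)$ is continuous and strictly positive on $\operatorname{Int} C\times\operatorname{Int} C$. Hence if $\beta\ge 0$ and $\beta\not\equiv 0$, continuity makes $\beta$ positive on an open subset of $\operatorname{Int}C$ (the boundary has measure zero by \autoref{assumption: C}), and therefore $T_t^\pi\beta(x)>0$ for every $x\in\operatorname{Int}C$. Applied to the eigenfunction of \autoref{thm: spectral radius}, this gives $\psi_t^\pi>0$ on $\operatorname{Int}C$. The boundary needs care, because $\psi_t^\pi$ vanishes on $\partial C$ when $\partial C$ is regular, so the cone of nonnegative functions has no interior point in $\mathfrak{D}_0$. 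I would therefore not invoke the strong Krein--Rutman theorem directly, and would instead argue by hand in the Jentzsch style using the kernel.

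Next comes simplicity, done by an extremal argument. Suppose $T_t^\pi\phi = r_t^\pi\phi$ with $\phi$ real. Both $\phi$ and $\psi_t^\pi$ vanish at the boundary, so I would compare them through the kernel rather than through sup-norm ratios. Let $s^*=\sup\{s: \psi_t^\pi - s\phi \ge 0\}$. This is finite and positive after replacing $\phi$ by $-\phi$ if necessary, provided $\phi\ne c\psi_t^\pi$. The function $g=\psi_t^\pi-s^*\phi\ge0$ is again an eigenfunction. If $g\not\equiv0$, strong positivity gives $g = (r_t^\pi)^{-1}T_t^\pi g$, which is strictly positive in the interior. To contradict maximality of $s^*$ I need a boundary-decay comparison $g\ge \epsilon\,\psi_t^\pi \ge \epsilon' |\phi|$ near $\partial C$. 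I would obtain it from the kernel: both functions are integrals of the same kernel against bounded weights, and I would aim for the bound $P_t^\pi(x,y)\ge c(y)\,P_t^\pi(x,y_0)$-type comparability near the boundary (a boundary Harnack inequality). This step is the main obstacle. If it resists, the fallback is to work in the weighted space of functions dominated by $\psi_t^\pi$, i.e. the order-unit space $\{\beta: |\beta|\le c\psi_t^\pi\}$, where $\psi_t^\pi$ is an interior point. For complex eigenfunctions, I would apply the argument to real and imaginary parts, since $T_t^\pi$ is real. Algebraic simplicity (no Jordan block) would follow by the same comparison applied to a would-be generalized eigenvector $\eta$ with $(T_t^\pi-r_t^\pi)\eta=\psi_t^\pi$: iterating gives $(T_t^\pi)^n\eta = (r_t^\pi)^n\eta + n (r_t^\pi)^{n-1}\psi_t^\pi$, and this contradicts $|\eta|\le c\psi_t^\pi$ together with positivity as $n\to\infty$.

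Finally, I would exclude other peripheral eigenvalues. Suppose $T_t^\pi\phi=\lambda\phi$ with $|\lambda|=r_t^\pi$ and $\lambda\ne r_t^\pi$. Positivity gives $T_t^\pi|\phi|\ge|T_t^\pi\phi| = r_t^\pi|\phi|$. The standard argument then shows $|\phi|$ is itself an $r_t^\pi$-eigenfunction: pair with the positive left eigenmeasure of the adjoint, which Krein--Rutman for the adjoint supplies. By simplicity, $|\phi| = c\psi_t^\pi$. Then equality in $\left|\int\phi(y)P_t^\pi(x,y)dy\right| = \int|\phi(y)|P_t^\pi(x,y)dy$ with a strictly positive kernel forces $\phi = e^{i\theta}|\phi|$ with a constant phase, so $\lambda = r_t^\pi$, a contradiction. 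Combined with the compactness remark above, this yields the gap $\rho$.
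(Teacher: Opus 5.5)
\textbf{Genuine gap: the simplicity step.} Your reduction of the spectral gap to the absence of other peripheral eigenvalues (compactness plus Riesz--Schauder) is fine, and so is your peripheral argument, but uniqueness of the dominant eigenfunction is not established. The extremal argument with $s^*=\sup\{s:\psi_t^\pi-s\phi\ge 0\}$ needs a two-sided boundary comparison: $|\phi|\le C\psi_t^\pi$ and $g\ge \epsilon\psi_t^\pi$ near $\partial C$. Since $\phi$, $g$ and $\psi_t^\pi$ all vanish on $\partial C$, without it you can neither guarantee $s^*>0$ nor contradict the maximality of $s^*$. That estimate is a boundary Harnack (intrinsic-ultracontractivity-type) property of a killed, non-symmetric diffusion on a domain with corners. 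Nothing in the paper's assumptions supplies it, and you do not prove it. Your fallback does not avoid it either. To use the order-unit space generated by $\psi_t^\pi$, you must already know that $\phi$ lies in it and that $T_t^\pi$ lifts nonzero positive elements above a multiple of $\psi_t^\pi$, which is the same comparison. The Jordan-block argument has the same hole, and algebraic simplicity is not part of the statement anyway.

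\textbf{The fix: no boundary comparison is needed.} Your last paragraph already contains the tool, but you ran the steps in the wrong order. Delete ``by simplicity'' there. The dual-eigenmeasure step then shows $T_t^\pi|\phi|=r_t^\pi|\phi|$ for any eigenfunction with $|\lambda|=r_t^\pi$. Equality in the triangle inequality against the strictly positive kernel gives $\phi=e^{i\theta}|\phi|$; for a real $r_t^\pi$-eigenfunction this means constant sign.

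To make the dual step airtight, note that $\mu=(r_t^\pi)^{-1}(T_t^\pi)^*\mu$ has a Lebesgue density, so $\mu(\partial C)=0$. Hence $\mu(T_t^\pi h)>0$ for every nonzero $h\ge 0$, and $\mu(T_t^\pi h)=r_t^\pi\mu(h)=0$ with $h=T_t^\pi|\phi|-r_t^\pi|\phi|$ forces $h=0$.

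Now take a real $r_t^\pi$-eigenfunction $\phi$, pick $x_0\in\operatorname{Int}C$ (where $\psi_t^\pi(x_0)>0$), and set $g=\phi-\frac{\phi(x_0)}{\psi_t^\pi(x_0)}\psi_t^\pi$. If $g\not\equiv 0$, it is an $r_t^\pi$-eigenfunction of constant sign. Then $g=(r_t^\pi)^{-1}T_t^\pi g$ is strictly of one sign on $\operatorname{Int}C$ by your strong-positivity step, contradicting $g(x_0)=0$. Complex eigenfunctions are handled through their real and imaginary parts.

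This is essentially the paper's proof. It does the constant-sign argument first, then uses $f=(\phi(x_1)+\phi(x_2))\psi_t^\pi-(\psi_t^\pi(x_1)+\psi_t^\pi(x_2))\phi$, with $f(x_1)=-f(x_2)$, in place of your $g$. With this reordering your proof goes through, and your dual-eigenmeasure justification of $T_t^\pi|\phi|=r_t^\pi|\phi|$ is more careful than the paper's one-line appeal to the spectral radius.
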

\begin{proof}
    See Appendix~\autoref{proof: uniqueness, spectral gap}.
\end{proof}

\subsection{The Infinitesimal Generator}

Suppose $\pi$ is fixed. While $T^\pi$ has a semigroup structure as seen in \autoref{thm: T properties}, we can define the infinitesimal generator $A^\pi$ for the semigroup $T^\pi$ as follows.
\begin{definition}[Infinitesimal Generator of $T^\pi$] \label{def: infinitesimal generator of T}
The infinitesimal generator $A^\pi$ of the semigroup $T^\pi$ is defined as
\begin{equation}
    A^\pi \beta (x) := \lim_{t\searrow 0} \frac{T_t^\pi \beta(x) - \beta(x)}{t}
\end{equation}
whenever the limit exists.
\end{definition}
This $A^\pi$ should be clearly differentiated from that of $\mathcal{A}^\pi$ from \autoref{def: infinitesimal generator}: $A^\pi$ is with respect to the semigroup structure of $T^\pi$ and thus should depend on the safety constraint set $C$, but $\mathcal{A}^\pi$ is with respect to the stochastic system itself and does not depend on $C$. However, one can prove that they actually produce identical values for $x \in \operatorname{Int} C$.
This provides an important bridge connecting between the operator $T_t^\pi$ and the SCBF conditions in \autoref{def: scbf}, which we discuss in Section~\autoref{sec: eigenfunction as SCBF}.
\begin{theorem} \label{thm: inf_gen of T}
    Let $x \in \operatorname{Int} C$. Then, $A^\pi \beta(x)$ exists if and only if $\mathcal{A}^\pi \beta(x)$ exists, and $A^\pi \beta(x) = \mathcal{A}^\pi \beta (x)$.
\end{theorem}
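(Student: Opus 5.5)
The plan is to show that the two difference quotients defining $A^\pi\beta(x)$ and $\mathcal{A}^\pi\beta(x)$ differ by a term that vanishes as $t\searrow 0$. Once this holds, one limit exists if and only if the other does, and the values coincide.

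Fix $x\in\operatorname{Int} C$ and let $\tau_C \coloneqq \inf\{s\geq 0 : X_s\notin C\}$ be the exit time from $C$. Since $x\in C$, definition \autoref{eq: T definition} gives
\begin{equation*}
T_t^\pi\beta(x) = \mathbb{E}^\pi[\beta(X_t^\dagger)\,|\,X_0=x] = \mathbb{E}^\pi[\beta(X_t)1_{\{\tau_C>t\}}\,|\,X_0=x],
\end{equation*}
where we use the convention $\beta(K)=0$ that is implicit in the domain of $T_t^\pi$. The unkilled expectation is $\mathbb{E}^\pi[\beta(X_t)\,|\,X_0=x]$, with $\beta$ extended outside $C$ in whatever way $\mathcal{A}^\pi$ uses. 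Subtracting the two, we get
\begin{equation*}
\Big|\tfrac{1}{t}\big(T_t^\pi\beta(x) - \mathbb{E}^\pi[\beta(X_t)\,|\,X_0=x]\big)\Big| \le \tfrac{1}{t}\,\mathbb{E}^\pi\big[|\beta(X_t)|\,1_{\{\tau_C\le t\}}\,\big|\,X_0=x\big].
\end{equation*}
Since $\beta$ is bounded near the relevant region, this is at most $M\,\mathbb{P}^\pi[\tau_C\le t\,|\,X_0=x]/t$. If the extension of $\beta$ grows at most polynomially, we would instead apply Cauchy--Schwarz together with moment bounds for the SDE.

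The key step, and the main obstacle, is showing that $\mathbb{P}^\pi[\tau_C\le t\,|\,X_0=x] = o(t)$. Let $\delta>0$ be such that the closed ball $B_\delta(x)\subset C$. Then $\{\tau_C\le t\}\subseteq\{\sup_{s\le t}|X_s-x|\ge\delta\}$. To bound the latter event, we localize. Since $f$ and $\sigma$ are locally Lipschitz and $\pi$ is Lipschitz, the closed-loop drift and diffusion are bounded on $\bar B_\delta(x)$, say by $F$ and $S$. We stop the process at the exit time of this ball and split $X_s - x$ into a drift part of size at most $Fs$ and a martingale part $\int_0^s \sigma\,dW$. For $t<\delta/(2F)$, exiting the ball requires the martingale part to reach size $\delta/2$. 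Doob's inequality applied with the fourth moment, combined with the Burkholder--Davis--Gundy inequality, gives
\begin{equation*}
\mathbb{P}\Big[\sup_{s\le t}\Big|\int_0^{s\wedge\tau}\sigma\,dW\Big|\ge\delta/2\Big]\le \frac{c\,S^4t^2}{\delta^4} = O(t^2) = o(t).
\end{equation*}
An exponential martingale bound would give an even faster rate. This settles the claim.

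It follows that
\begin{equation*}
\frac{T_t^\pi\beta(x)-\beta(x)}{t} = \frac{\mathbb{E}^\pi[\beta(X_t)\,|\,X_0=x]-\beta(x)}{t} + o(1).
\end{equation*}
Hence the limit on the left exists if and only if the limit on the right exists, and the two limits are equal. That is, $A^\pi\beta(x)$ exists if and only if $\mathcal{A}^\pi\beta(x)$ exists, and $A^\pi\beta(x)=\mathcal{A}^\pi\beta(x)$ whenever either exists. A second subtlety is that $\mathcal{A}^\pi$ is defined for the unkilled process and may see values of $\beta$ outside $C$. The bound above shows that these contributions are $o(t)$ regardless, so the result does not depend on how $\beta$ is extended, provided the extension is bounded or of controlled growth. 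Note that \autoref{assumption: full_rank} is not needed here, only local boundedness of the coefficients.
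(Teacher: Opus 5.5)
Your proposal is correct and follows essentially the same route as the paper. Both bound the gap between the killed and unkilled expectations by $\sup|\beta|$ times the probability of leaving a ball $B(x,R)\subseteq\operatorname{Int}C$ by time $t$, and then show that this probability is $o(t)$ via a drift/martingale split, Markov's inequality, and Burkholder--Davis--Gundy with a moment $p>2$ (you take $p=4$). Your version is slightly tidier in two places: stopping at the ball's exit time justifies the coefficient bounds for the unkilled process, and working with the vector norm of the martingale avoids the paper's informal passage from every direction $\hat n$ to the whole ball.
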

\begin{proof}
    See Appendix \autoref{proof: inf_gen of T}.
\end{proof}
For a trajectory starting in the interior of $C$ to exit $C$, i.e., for $X_t$ and the killed process $X_t^\dagger$ to begin to differ, it must first cross the boundary $\partial C$. Since an It\^o diffusion has continuous sample paths with probability $1$ and does not allow discrete jumps, the first hitting time of $\partial C$ for the trajectory is strictly positive almost surely for interior initial conditions. Thus, in the limit $t \searrow 0$, the safety constraint has no effect on the two infinitesimal generators.

On the other hand, consider an initial condition $x$ on the boundary of $C$, $\partial C$. Under \autoref{assumption: full_rank}, regardless of $\pi$, there is always a nonzero component in the diffusion term $\sigma(x,u) dW_t$ in the normal direction of the boundary. 
Hence, the Brownian term $dW_t$ induces instantaneous, arbitrarily fine jitter across the boundary near time zero, because the random fluctuations dominate the deterministic drift and any control.\footnote{The drift accumulates only linearly in time, while the magnitude of the Brownian motion is proportional to $\sqrt{t}$.}
This makes trajectories starting on the boundary jitter between the interior and the exterior of the set arbitrarily fast as $t \searrow 0$, making the first exit time zero almost surely. Thus, the process is killed almost surely at time zero, and $T_t^\pi$ annihilates all test functions at the boundary. This leads to the following statement.

\begin{theorem} \label{thm: bdry}
    Under \autoref{assumption: full_rank}, $T_t^\pi \beta(x) = 0$ for all $\beta \in \mathfrak{D}$, $x \in \partial C$ and $t > 0$.
\end{theorem}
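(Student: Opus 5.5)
The plan is to show that, for $x\in\partial C$, the first exit time $\tau_C := \inf\{s>0 : X_s\notin C\}$ of the process started at $X_0=x$ equals zero almost surely. Once this holds, for every $t>0$ we have $X_t^\dagger = K$ almost surely. The definition \autoref{eq: T definition}, together with the convention that test functions vanish at the coffin state, then gives $T_t^\pi\beta(x) = \mathbb{E}^\pi[\beta(X_t^\dagger)\,|\,X_0=x] = 0$ for every $\beta$, with no condition on $\beta$ beyond boundedness. The whole proof therefore reduces to boundary regularity of $\partial C$ for the closed-loop diffusion.

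For the regularity step, I would use \autoref{assumption: C} to fix a boundary point $x$ at which the boundary is locally smooth, with outward unit normal $n$. Then $\{y : n^\top(y-x) > \epsilon(y)\}$ lies locally outside $C$, where $\epsilon(y)=O(|y-x|^2)$. Next, consider the scalar process $Y_s = n^\top(X_s-x)$. By \autoref{lemma: ito}, applied with the Lipschitz closed-loop coefficients $\mu(y)=f(y,\pi(y))$ and $\Sigma(y)=\sigma(y,\pi(y))$, we get $dY_s = n^\top\mu\,ds + n^\top\Sigma\,dW_s$. \autoref{assumption: full_rank} gives $|n^\top\Sigma(x)|>0$, and by continuity and compactness this stays bounded below near $x$.

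Write the martingale part as a time-changed Brownian motion $B$ (Dambis--Dubins--Schwarz) whose clock has rate bounded above and below near $s=0$. The law of the iterated logarithm, $\limsup_{s\searrow0} B_s/\sqrt{2s\log\log(1/s)} = 1$ a.s., then shows that $Y$ takes values of order $\sqrt{s\log\log(1/s)}$ at arbitrarily small times. The drift contributes only $O(s)$, and the curvature correction $\epsilon(X_s)$ is $O(|X_s-x|^2)=O(s\log\log(1/s))$. Both are dominated, so $X_s$ exits $C$ at times arbitrarily close to $0$, i.e.\ $\tau_C=0$ a.s. An alternative I would keep in reserve is the Blumenthal $0$--$1$ law: the event $\{\tau_C=0\}$ is germ-measurable, so it suffices to show it has positive probability. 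Positivity follows from $\mathbb{P}(Y_s>\epsilon_s)\to 1/2$ by the Gaussian scaling of the leading term.

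The main obstacle is the non-smooth part of the piecewise smooth boundary, such as corners, and boundary points where $C$ is locally thin. At a convex corner the complement locally contains a half-space tangent to a supporting face, so the same argument applies with $n$ taken as a normal of that face. For general points I would invoke an exterior cone condition, which piecewise smooth regular sets of \autoref{assumption: C} satisfy. The classical exterior-cone criterion for uniformly nondegenerate diffusions with bounded coefficients then gives regularity, again through the Blumenthal $0$--$1$ law and a scaling lower bound on the probability of entering the cone by time $s$. With $\tau_C=0$ a.s.\ established at every $x\in\partial C$, the conclusion $T_t^\pi\beta(x)=0$ for all $t>0$ follows as in the first paragraph.
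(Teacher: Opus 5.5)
Your argument is correct at smooth boundary points, and also at convex corners, where the complement locally contains the exterior of one smooth face. It reaches $\tau_C=0$ a.s.\ by a different mechanism than the paper.

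\textbf{The paper's route.} It uses an exact local defining function $\varphi$, so no curvature correction arises, and writes $d\varphi(X_t)=\xi\,dt+c(X_t)\,dW_t$. It freezes the diffusion coefficient at $x$, splitting the martingale part into $c(x)W_t$ plus an error whose running supremum is $o(\sqrt{t})$ in probability (Burkholder--Davis--Gundy on a localization event). The reflection principle $\mathbb{P}[\min_{\tau\le t}W_\tau\ge -b]=\mathbb{P}[|W_t|\le b]$ then drives the probability of not yet having left $C$ by time $t$ below any prescribed level as $t\searrow0$. No zero--one law is needed.

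\textbf{Your routes.} The iterated-logarithm argument gives the almost-sure statement pathwise in one stroke, at the price of Dambis--Dubins--Schwarz and the LIL. The Blumenthal argument asks for less. It only needs $\liminf_{s\searrow0}\mathbb{P}[X_s\notin C]>0$, which the Gaussian limit of $(X_s-x)/\sqrt{s}$ supplies, rather than a bound tending to zero. It also transfers verbatim to any boundary point admitting an exterior cone.

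\textbf{Non-smooth points.} The paper argues instead by density of smooth points plus continuity of $T_t^\pi\beta$ on $C$. Continuity of $T_t^\pi\beta$ at a boundary point is essentially equivalent to that point being regular. So that step leans on the paper's standing, unproved claim that $T_t^\pi$ maps into continuous functions, and your cone argument is the more self-contained of the two.

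\textbf{One correction.} Piecewise smoothness in \autoref{assumption: C} does not by itself give an exterior cone at every boundary point: two smooth pieces meeting tangentially produce an exterior cusp. The danger is thinness of the complement of $C$ near $x$, not thinness of $C$, as your last paragraph suggests. Wiener's test still gives regularity for mild cusps. However, a sufficiently thin exterior spine (Lebesgue's example) is irregular. There $T_t^\pi\beta(x)>0$ for $\beta\in\mathfrak{D}$ positive on $\operatorname{Int}C$, so the conclusion genuinely fails at such a point. You should state the exterior cone condition (e.g., Lipschitz boundary) as an explicit added hypothesis, not as a consequence of \autoref{assumption: C}.
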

\begin{proof}
    See Appendix \autoref{proof: bdry}.
\end{proof}

Now, redefine the domain of the operator as
\begin{equation}
    \mathfrak{D} \coloneqq \left\{\beta : C \rightarrow \mathbb{R} : \begin{aligned}
        &\text{$\beta$ is continuous on $C$}, \\
        &\text{$\beta(x) = 0$, $\forall x \in \partial C$}
    \end{aligned}\right\}
\end{equation}
by adding the above boundary condition. This smaller domain also has the same Banach space structure under the same norm \autoref{eq: banach norm} as $\mathfrak{D}_0$. Since the newly introduced boundary condition is compatible to \autoref{thm: bdry}, $T_t^\pi$ is well defined also as a linear operator on $\mathfrak{D}$.

\subsection{Computation}

Combining \autoref{lemma: ito} and \autoref{thm: inf_gen of T} constitutes a constructive way of evaluating $T_t^\pi$ through solving a PDE. Let $b: [0, t] \times C\rightarrow \mathbb{R}$ be the solution to the following initial-boundary value problem:
\begin{equation} \label{eq: pde}
\begin{aligned}
    \partial_t b(\tau,x) &= \mathcal{A}^\pi b(\tau,x) & \forall (\tau,x) \in [0,t] \times \operatorname{Int} C, \\
    b(0,x) &= \beta(x) & \forall x \in C, \\
    b(\tau,x) &= 0 & \forall (\tau,x) \in [0,t] \times \partial C,
\end{aligned}
\end{equation}
where the operator $\mathcal{A}^\pi$ in the first line is applied to the $b(\tau,\cdot)$ as a function of $x$.
Under \autoref{assumption: full_rank}, $\mathcal{A}^\pi$ becomes a second-order elliptic differential operator and thus \autoref{eq: pde} has a unique and smooth solution regardless of the horizon length $t > 0$, and
\begin{equation}
    b(\tau, \cdot) = T_\tau^\pi \beta, \quad \forall \tau \in [0,t].
\end{equation}
This PDE is a second-order parabolic PDE that is solved forwards in time, which most modern solvers can handle.

\section{The Dominant Eigenfunction as SCBF} \label{sec: eigenfunction as SCBF}

In this section, we discuss the meanings of the eigenpair discussed above in the safety perspective and show that the eigenfunction is actually a valid SCBF for the safety-critical system, whose decay rate is described by the eigenvalue.

Let $r_t^\pi$ and $\psi_t^\pi$ be the dominant eigenvalue and eigenfunction of $T_t^\pi$, respectively. We will firstly show that $\psi_t^\pi$ remains stationary regardless of what positive time horizon we take and $r_t^\pi$ inherits the same continuous semigroup structure from $T_t^\pi$, i.e., 
\begin{equation}
    r_{t+s}^\pi = r_t^\pi \cdot r_s^\pi, \;\; \forall t, s \geq 0; \quad r_0^\pi = 1; \quad \lim_{t\searrow 0} r_t^\pi = 1,
\end{equation}
and thus
\begin{equation} \label{eq: r exponential}
    r_t^\pi = e^{-\gamma^\pi t},
\end{equation}
where $\gamma^\pi$ is a nonnegative real value that depends only on the selection of $\pi$ and not on $t$.

\begin{theorem} \label{thm: eigenstructure}
    Fix $s > 0$ and $\pi$, let $\psi^\pi$ be the dominant eigenfunction of $T_s^\pi$.
    Then, $\psi^\pi$ is the dominant eigenfunction of $T_t^\pi$ for any $t > 0$. Moreover, $r_t^\pi = e^{-\gamma^\pi t}$ with $\gamma^\pi \geq 0$ not depending on $t$.
\end{theorem}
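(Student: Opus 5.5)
The plan is to exploit commutativity within the semigroup together with the uniqueness of the dominant eigenfunction from \autoref{thm: uniqueness, spectral gap}. Fix $s>0$, and let $\psi^\pi \geq 0$ be the dominant eigenfunction of $T_s^\pi$ with eigenvalue $r_s^\pi>0$ (\autoref{thm: spectral radius}).

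\textbf{Step 1: rational multiples of $s$.} Take any $t>0$ and set $\phi \coloneqq T_t^\pi \psi^\pi$. By the semigroup property in \autoref{thm: T properties}, $T_s^\pi T_t^\pi = T_{s+t}^\pi = T_t^\pi T_s^\pi$. Hence
\begin{equation*}
T_s^\pi \phi = T_t^\pi T_s^\pi \psi^\pi = r_s^\pi \phi.
\end{equation*}
Positivity of $T_t^\pi$ gives $\phi \geq 0$. So $\phi$ is either zero or an eigenfunction of $T_s^\pi$ for $r_s^\pi$, and uniqueness up to scale forces $\phi = c_t \psi^\pi$ for some $c_t \geq 0$.

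\textbf{Step 2: $c_t>0$.} I must rule out $\phi=0$. If $T_t^\pi\psi^\pi = 0$, choose $n$ with $ns \geq t$. Then $(r_s^\pi)^n \psi^\pi = T_{ns}^\pi\psi^\pi = T_{ns-t}^\pi T_t^\pi \psi^\pi = 0$, which contradicts $r_s^\pi>0$ and $\psi^\pi\neq 0$. Thus $\psi^\pi$ is an eigenfunction of $T_t^\pi$ with eigenvalue $c_t>0$.

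\textbf{Step 3: $c_t$ is the dominant eigenvalue of $T_t^\pi$.} Let $\psi_t$ be the dominant nonnegative eigenfunction of $T_t^\pi$, with eigenvalue $r_t^\pi$. Running Steps 1--2 with the roles of $s$ and $t$ swapped shows that $\psi_t$ is an eigenfunction of $T_s^\pi$, and it is nonnegative. I still need it to correspond to $r_s^\pi$. For this I would use a Krein--Rutman-type fact: a nonnegative eigenfunction of a positive compact operator with strictly positive kernel must belong to the spectral radius. The justification is to pair with the dual positive eigenmeasure, whose strict positivity follows from \autoref{assumption: full_rank}.

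A cleaner route avoids this fact and is the one I would try first. Since $\psi_t$ is an eigenfunction of $T_s^\pi$ with eigenvalue $\mu$, we get $T_{st}^\pi \psi_t = \mu^{t}\psi_t = (r_t^\pi)^{s}\psi_t$. Likewise $T_{st}^\pi \psi^\pi = (r_s^\pi)^{t}\psi^\pi$ once the multiplicativity below is established. Uniqueness for the single compact operator $T_{st}^\pi$ then identifies the two eigenfunctions.

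\textbf{Step 4: exponential form of the eigenvalues.} With a common eigenfunction $\psi^\pi$, set $r_t \coloneqq c_t$. Applying $T_{t+u}^\pi = T_t^\pi T_u^\pi$ to $\psi^\pi$ gives $r_{t+u}=r_t r_u$, and $r_0=1$. Non-expansiveness gives $r_t\le 1$. For measurability and continuity, evaluate at a point $x_0$ with $\psi^\pi(x_0)>0$: then $r_t = T_t^\pi\psi^\pi(x_0)/\psi^\pi(x_0)$, which is continuous in $t$ because the PDE solution \autoref{eq: pde} is continuous in time. Since $r_t$ is positive, multiplicative and continuous, Cauchy's functional equation yields $r_t = e^{-\gamma^\pi t}$, and $r_t\le1$ gives $\gamma^\pi\ge 0$. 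Finally, the spectral radius equals $\lim\|T_t^n\|^{1/n}$, and $e^{-\gamma^\pi t}$ attains it by Step 3.

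\textbf{Main obstacle.} The hardest step is Step 3: showing that the common eigenfunction corresponds to the \emph{dominant} eigenvalue for every $t$, not merely to some eigenvalue. The fallback is to prove strict positivity of $\psi^\pi$ on $\operatorname{Int} C$ using the positive kernel. Then the positive dual eigenvector $\nu$ of $T_t^\pi$ satisfies $r_t^\pi\langle\nu,\psi^\pi\rangle = c_t\langle\nu,\psi^\pi\rangle$ with $\langle\nu,\psi^\pi\rangle>0$, which gives $c_t = r_t^\pi$.
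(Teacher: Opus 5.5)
Your proof is essentially the paper's. Commutation of $T_s^\pi$ and $T_t^\pi$ plus uniqueness (\autoref{thm: uniqueness, spectral gap}) makes $\psi^\pi$ an eigenfunction of every $T_t^\pi$. Pairing with a positive Krein--Rutman dual eigenvector forces $c_t = r_t^\pi$, and multiplicativity gives $r_t^\pi = e^{-\gamma^\pi t}$; you use continuity and Cauchy's equation where the paper uses monotonicity from $r_t^\pi \le 1$. Your Step~2 makes explicit the nonvanishing of $T_t^\pi\psi^\pi$ that the paper's ``same signs'' remark leaves implicit. Your fallback in Step~3 needs only a nonzero positive dual eigenvector together with $\psi^\pi>0$ on $\operatorname{Int} C$. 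That is a bit more careful than the paper's claim, taken from Krein--Rutman, that the dual eigenmeasure is strictly positive on every nonzero $\beta\ge 0$.

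The ``cleaner route'' you would try first does not close as written. The identity $T_{st}^\pi = (T_s^\pi)^t$ holds only for integer $t$. With real exponents the eigenvalue formulas can still be justified through Step~4, which uses only Steps~1--2, so that part is not circular. The problem is that uniqueness for $T_{st}^\pi$ applies only to eigenfunctions of its spectral radius. Nothing you have shows $(r_s^\pi)^t = \boldsymbol{r}(T_{st}^\pi)$ or $(r_t^\pi)^s = \boldsymbol{r}(T_{st}^\pi)$ for non-integer exponents; that is the original difficulty moved to another time.

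If you want a duality-free argument, it can be repaired as follows. For $t=(p/q)s$, Gelfand's formula gives $\boldsymbol{r}(T_t^\pi)^q = \boldsymbol{r}(T_{qt}^\pi) = \boldsymbol{r}(T_s^\pi)^p = c_s^p = c_t^q$. For general $t$, note three facts: $c_t$ is an eigenvalue of $T_t^\pi$; $T_{t-\tau}^\pi$ and $T_\tau^\pi$ commute; and $\|T_{t-\tau}^\pi\|\le 1$. Together these give $c_t \le \boldsymbol{r}(T_t^\pi) \le \boldsymbol{r}(T_{t-\tau}^\pi)\,\boldsymbol{r}(T_\tau^\pi) \le e^{-\gamma^\pi\tau}$ for every $\tau<t$ commensurable with $s$. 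Letting $\tau\nearrow t$ yields $\boldsymbol{r}(T_t^\pi)=c_t$. This version uses only the spectral radius formula and non-expansiveness. Otherwise, drop the cleaner route and keep the dual-eigenvector step, which is what actually carries your proof.
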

\begin{proof}
    See Appendix \autoref{proof: eigenstructure}.
\end{proof}

Since the dominant eigenfunction is agnostic to $t$, hereafter, we use the notation $\psi^\pi$ to denote the dominant eigenfunction of $T_t^\pi$ for \textit{any} $t > 0$.
Observe that since $T_t^\pi$ is a linear operator defined on $\mathfrak{D}$, $\psi^\pi$ should have zero value on the boundary $\partial C$. Additionally, we can find that $A^\pi \psi^\pi$ exists for the eigenfunction $\psi^\pi$ and evaluates to
\begin{equation}
\begin{aligned}
    A^\pi \psi^\pi &= \lim_{t \searrow 0} \frac{T_t^\pi \psi^\pi - \psi^\pi}{t} \\
    &= \lim_{t \searrow 0} \frac{e^{-\gamma^\pi t} \psi^\pi - \psi^\pi}{t} = -\gamma^\pi \psi^\pi,
\end{aligned}
\end{equation}
making $(-\gamma^\pi, \psi^\pi)$ an eigenpair of the linear operator $A^\pi$.

Thus, through \autoref{thm: inf_gen of T}, we arrive at
\begin{equation}
    \mathcal{A}^\pi \psi^\pi (x) = -\gamma^\pi \psi^\pi (x)
\end{equation}
for all $x \in \operatorname{Int} C$. Combined with the abovementioned boundary condition, it can be easily seen that $\psi^\pi$ is a valid SCBF satisfying all the conditions in \autoref{def: scbf} with respect to any decay rate $\gamma$ that is greater than or equal to $\gamma^\pi$.
Moreover, since $\psi^\pi \in \mathfrak{D}$ is continuous, $\mathcal{A}^\pi \psi^\pi (x)$ should also be a continuous function, and thus $\psi^\pi$ is at least twice differentiable.\footnote{Although we do not state this rigorously, using Schauder's theory on interior regularity of PDE solutions \cite[Theorem 6.17]{gilbarg1977elliptic}, one can prove that if the coefficients ($f$, $\sigma$, $\pi$) are all $k$ times continuously differentiable, then $\psi^\pi$ is at least $k+2$ times differentiable.}
This smoothness enhances input continuity when the SCBF is implemented as an optimization-based safety filter such as SCBF-QP \autoref{eq: scbf-qp}.

With $\psi^\pi$ and $\gamma^\pi$, we can revisit the probability approximations \autoref{eq: prob approx} and \autoref{eq: prob approx 2} and rewrite them in the following form:
\begin{equation} \label{eq: prob approx continuous}
\begin{aligned}
    Z^\pi (t, x) &= c\cdot \psi^\pi (x) e^{-\gamma^\pi t} + o(e^{-\gamma^\pi t}) \\
    &\approx c\cdot \psi^\pi (x) e^{-\gamma^\pi t},
\end{aligned}
\end{equation}
where $c \geq 0$ is a constant independent of $t$.
As discussed in Section~\autoref{sec: spectral analysis on T}, once sufficient time has elapsed and all other faster modes have decayed, the system's long-term safety probability becomes determined solely by $\psi^\pi$ and $\gamma^\pi$.
The eigenfunction $\psi^\pi$ encodes relative safety across different initial conditions, with higher values indicating safer initial states. The eigenvalue $\gamma^\pi$ reflects the overall safety of the closed-loop system under policy $\pi$, where a value closer to $0$ corresponds to a safer policy.
Through \autoref{eq: prob approx continuous}, we can also find that the SCBF $\psi^\pi$ is a \textit{quantitative} measure of safety of the queried initial state that is tightly tied to the actual safety probability $Z^\pi$.
The role of SCBFs as a safety monitor for the system can therefore be made more meaningful by using the dominant eigenfunction as an SCBF.

\section{Power-Policy Iteration} \label{sec: power-policy iteration}

Now that we have seen the dominant eigenfunction of $T_t^\pi$ can serve as an SCBF, the next question is how to compute it. In this section, we introduce a power iteration method to compute the pair $(\gamma^\pi, \psi^\pi)$, given the safe set $C$ and the stochastic closed-loop dynamics model. The algorithm is then extended by adding a policy improvement step, allowing joint computation of the SCBF and the backup policy, which we call the \textit{power-policy iteration}.

\subsection{Power Iteration with Backup Policy Fixed}

\begin{algorithm}[t]
\caption{Power Iteration (Policy Fixed)} \label{algo: power iteration, policy fixed}
\begin{algorithmic}[1]
\Require Initial guess $\psi$ with $\norm{\psi} = 1$, fixed backup policy $\pi:C\rightarrow U$
\Ensure The dominant eigenpair $(r_t^\pi, \psi^\pi)$ of $T_t^\pi$
\While{not converged}
    \State Compute $T_t^\pi \psi$ by solving \autoref{eq: pde}.
    \State $r \leftarrow \norm{T_t^\pi \psi}$
    \State $\psi \leftarrow T_t^\pi \psi / r$
\EndWhile
\State $r_t^\pi \leftarrow r$, $\psi^\pi \leftarrow \psi$
\end{algorithmic}
\end{algorithm}

Power iteration, also known as the von Mises iteration, is an algorithm that finds the dominant eigenpair of a matrix or a linear transformation. This algorithm is described in \autoref{algo: power iteration, policy fixed} where the power iteration is applied to the operator $T_t^\pi$. Starting from a nonnegative initial guess $\psi \in \mathfrak{D}$, the method repeatedly applies the operator $T_t^\pi$ to the current iterate by propagating the PDE \autoref{eq: pde} for horizon $t$, re-scale to normalize the result, and updates the eigenvalue estimate. The iteration continues until convergence, after which we also obtain $\gamma^\pi = -\frac{1}{t} \log r_t^\pi$ using \autoref{eq: r exponential}. The power iteration algorithm is guaranteed to converge linearly with positive real dominant eigenvalue with nonzero spectral gap.

\begin{theorem} \label{thm: power iteration convergence}
    \autoref{algo: power iteration, policy fixed} converges to the unique eigenpair $(r_t^\pi, \psi^\pi)$ for almost every initial guess $\psi \in \mathfrak{D}$. The rate of convergence is not slower than $\rho/r_t^\pi$, where $\rho$ is the maximum absolute value of the eigenvalues of $T_t^\pi$ other than $r_t^\pi$.
\end{theorem}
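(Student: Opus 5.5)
The plan is to use the spectral decomposition that comes from compactness. By \autoref{thm: T properties}, $T_t^\pi$ is compact on $\mathfrak{D}$ for $t>0$. By \autoref{thm: spectral radius} and \autoref{thm: uniqueness, spectral gap}, $r_t^\pi>0$ is an eigenvalue with a one-dimensional eigenspace spanned by $\psi^\pi \geq 0$, and the rest of the spectrum lies in the disc $\{|z|\le\rho\}$ with $\rho<r_t^\pi$.

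The first step is to verify that $r_t^\pi$ is isolated, which holds because the nonzero spectrum of a compact operator is discrete. This lets us define the Riesz projection
\begin{equation*}
P=\frac{1}{2\pi i}\oint_{|z-r_t^\pi|=\epsilon}(z\operatorname{Id}-T_t^\pi)^{-1}\,dz .
\end{equation*}
Here $P$ commutes with $T_t^\pi$. Its range is the generalized eigenspace of $r_t^\pi$, and its complement $\mathfrak{N}=\ker P$ is invariant with $\boldsymbol{r}(T_t^\pi|_{\mathfrak{N}})\le\rho$. One point needs care: the theorem's uniqueness statement gives geometric multiplicity one, but I also need algebraic multiplicity one, i.e. no Jordan block. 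I would get this from a Krein--Rutman argument. The eigenfunction is strictly positive in $\operatorname{Int}C$, because the kernel $P_t^\pi$ is positive there by \autoref{assumption: full_rank}. Suppose there were a $\phi$ with $(T_t^\pi-r)\phi=\psi^\pi$. Comparing $\phi$ with multiples of $\psi^\pi$ and iterating would then make $\|(T_t^\pi)^n\phi\|/r^n$ grow like $n$. That contradicts positivity and domination: $|(T_t^\pi)^n\phi|\le c\,(T_t^\pi)^n\psi^\pi$ for suitable $c$, at least after one application of $T_t^\pi$ has produced the boundary-decay comparability. I expect this comparability near $\partial C$, where both functions vanish, to be the main obstacle. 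I would try first to handle it with a boundary Harnack or Hopf-type estimate for the parabolic PDE \autoref{eq: pde}. With algebraic multiplicity one, $P\beta=\ell(\beta)\psi^\pi$ for a bounded linear functional $\ell$.

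Next, decompose the initial guess as $\psi_0=\ell(\psi_0)\psi^\pi+\eta$ with $\eta\in\mathfrak{N}$. Then
\begin{equation*}
(T_t^\pi)^n\psi_0=(r_t^\pi)^n\Bigl(\ell(\psi_0)\psi^\pi+(r_t^\pi)^{-n}(T_t^\pi)^n\eta\Bigr).
\end{equation*}
By Gelfand's formula applied on $\mathfrak{N}$, for any $\rho'>\rho$ we have $\|(T_t^\pi)^n\eta\|\le C_{\rho'}\rho'^n\|\eta\|$. So the bracket converges to $\ell(\psi_0)\psi^\pi$ at rate $(\rho'/r_t^\pi)^n$, and this rate can be taken arbitrarily close to $\rho/r_t^\pi$.

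Finally, I would bring in the algorithm's iterates. Since the algorithm only rescales, its iterate is $\psi_n=(T_t^\pi)^n\psi_0/\|(T_t^\pi)^n\psi_0\|$. Whenever $\ell(\psi_0)\ne0$, this gives $\psi_n\to\pm\psi^\pi/\|\psi^\pi\|$, and the sign is $+$ for nonnegative initial guesses. It also gives
\begin{equation*}
r_n=\|T_t^\pi\psi_{n-1}\|\to r_t^\pi ,
\end{equation*}
since $\|T_t^\pi\psi^\pi\|=r_t^\pi\|\psi^\pi\|$, and both errors are $O((\rho'/r_t^\pi)^n)$. The exceptional set $\{\ell(\psi_0)=0\}$ is a closed hyperplane, which is nowhere dense and null in any reasonable sense. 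This is the meaning of "almost every". It also shows that every nonnegative nonzero initial guess works, since $\ell$ is strictly positive on such functions by irreducibility.
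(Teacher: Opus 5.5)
Your proof takes the same route as the paper's. The paper also forms the Riesz projection $P$ at $r_t^\pi$, writes $T_t^\pi = r_t^\pi P + N$ with $NP = PN = 0$ and $\boldsymbol{r}(N)=\rho$, and reads convergence off $\psi_n \propto P\psi_0 + (r_t^\pi)^{-n}N^n\psi_0$, which fails only when $P\psi_0=0$.

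You are more careful than the paper in two places:
\begin{itemize}
\item The paper goes straight from Theorem~\ref{thm: uniqueness, spectral gap} to ``$P$ is the projection onto the one-dimensional subspace spanned by $\psi^\pi$''. That theorem only gives geometric multiplicity one. The paper's identity $NP=0$, i.e.\ $T_t^\pi P = r_t^\pi P$, is exactly the absence of a Jordan block, so you are right to flag it.
\item Your bound $(\rho'/r_t^\pi)^n$ for every $\rho'>\rho$ is the correct reading of ``not slower than $\rho/r_t^\pi$''.
\end{itemize}
Your remark that every nonzero nonnegative initial guess works is a correct strengthening.

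The one step you leave open is this algebraic simplicity, and the route you sketch for it is much heavier than needed. The domination $|T_t^\pi\phi|\le c\,\psi^\pi$ amounts to $T_t^\pi 1_C \le c\,\psi^\pi$ all the way up to $\partial C$. That is a boundary-Harnack / intrinsic-ultracontractivity estimate for a non-symmetric, non-divergence-form operator. Such estimates exist on Lipschitz domains but are substantial. Hopf-type bounds alone fail at the corners that Assumption~\ref{assumption: C} permits, such as those of the box-shaped safe sets in the examples.

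You can bypass this with the positive dual eigenvector the paper already invokes in the proof of Theorem~\ref{thm: eigenstructure}. Krein--Rutman applied to the adjoint gives a nonzero functional $\varphi\ge 0$ on $\mathfrak{D}$ with $\varphi(T_t^\pi\beta)=r_t^\pi\varphi(\beta)$. Concretely, $\varphi$ is a nonzero finite nonnegative measure on $\operatorname{Int}C$. If $(T_t^\pi-r_t^\pi)\phi=\psi^\pi$, applying $\varphi$ gives $\varphi(\psi^\pi)=0$. This contradicts $\psi^\pi>0$ on $\operatorname{Int}C$, so there is no Jordan block.

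The same functional settles your last claim as well:
\begin{itemize}
\item From $\varphi\circ(z\operatorname{Id}-T_t^\pi)^{-1}=\varphi/(z-r_t^\pi)$ you get $\varphi\circ P=\varphi$, hence $\ell=\varphi/\varphi(\psi^\pi)$.
\item For nonzero $\beta\ge 0$, this gives $\ell(\beta)=\ell(T_t^\pi\beta)/r_t^\pi>0$, since $T_t^\pi\beta>0$ on $\operatorname{Int}C$ by positivity of the kernel.
\end{itemize}
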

\begin{proof}
    See Appendix \autoref{proof: power iteration convergence}.
\end{proof}

Convergence of the power iteration is generally faster when the initial guess is closer to the actual eigenfunction. In practice, one can \textit{warm-start} from a good initial guess, e.g., a handcrafted guess or a converged solution from a coarser grid, in order to reduce the number iterations needed to reach a given convergence threshold.

A standard caveat of power iteration is the sign ambiguity: \autoref{algo: power iteration, policy fixed} cannot distinguish between $\psi^\pi$ and $-\psi^\pi$ and can converge to either. However, since $T_t^\pi$ is a positive operator, if the iteration starts from a nonnegative initial condition $\psi$ such that $\psi(x) \geq 0$ for all $x \in C$, since neither applying $T_t^\pi$ nor the normalization step can flip the sign, all subsequent iterates remain nonnegative and therefore the converged $\psi^\pi$ will be also on the nonnegative side.

\subsection{Power-Policy Iteration: Power Iteration with Policy Updates}

\begin{algorithm}[t]
\caption{Power-Policy Iteration} \label{algo: power-policy iteration}
\begin{algorithmic}[1]
\Require Initial guess $\psi \geq 0$ with $\norm{\psi} = 1$, initial policy $\pi:C\rightarrow U$
\Ensure SCBF $\psi^\pi$, decay rate lower bound $\gamma^\pi$, and backup policy $\pi$
\While{not converged}
    \State Pick $\pi'$ such that $T_t^{\pi'} \psi(x) \geq T_t^{\pi}\psi (x)$ for all $x \in C$
    \State $\pi \leftarrow \pi'$
    \State $\gamma \leftarrow -\frac{1}{t} \log \norm{T_t^\pi \psi}$
    \State $\psi \leftarrow e^{\gamma t}\cdot T_t^\pi \psi$
\EndWhile
\State $\gamma^\pi \leftarrow \gamma$, $\psi^\pi \leftarrow \psi$
\end{algorithmic}
\end{algorithm}

The power iteration algorithm from the previous section works only when the policy $\pi$ is fixed.
However, because the SCBF decay rate $\gamma$ cannot be chosen to be smaller (i.e., more conservative) than the value $\gamma^\pi$ from the power iteration, the corresponding eigenfunction $\psi^\pi$ cannot offer a tighter safety probability guarantee than the policy $\pi$ itself.
In that sense, it is always desirable for $\pi$ to be safer. Even when $\pi$ is overly conservative, one can always relax the conservativeness by specifying a decay rate $\gamma$ larger than $\gamma^\pi$ and adjust the level of safety and conservativeness of the SCBF as needed.

Thus, we extend \autoref{algo: power iteration, policy fixed} by adding a policy update step so that it simultaneously computes the eigenpair and searches for a sufficiently safe backup policy $\pi$, as described in \autoref{algo: power-policy iteration}.
We call this extension \textit{power-policy iteration}, as it comprises both the power iteration step and a policy update step. In this modified iteration, the policy update step selects a new policy that yields a larger $T_t^\pi \psi$ for the current iterate $\psi$, which is a policy that is safer under the current estimate. This policy update step allows the initial policy that is potentially unsafe to gradually evolve into safer ones. 

If the current estimate is sufficiently close to the actual eigenfunction $\psi^\pi$, selecting a policy that increases $T_t^\pi \psi$ actually results in improved safety for the system, i.e., smaller $\gamma^\pi$. If $\pi$ and $\pi'$ are two policies such that $T_t^{\pi'} \psi^\pi (x) \geq T_t^\pi \psi^\pi (x)$ for all $x \in C$, then $\gamma^{\pi'} \geq \gamma^\pi$. This is because
\begin{equation}
    T_t^{\pi'} \psi^\pi (x) \geq T_t^\pi \psi^\pi = r_t^\pi \psi^\pi
\end{equation}
implies $T_t^{\pi'}$ has spectral radius greater than or at least equal to $r_t^\pi$.
Still, this does not necessarily guarantee that system's safety improves while $\psi$ is still in the progress of convergence. Nevertheless, as we demonstrate empirically in the numerical examples in Section~\autoref{sec: numerical examples}, once the initial transient phase has passed, the value of $\gamma$ tends to decrease monotonically and $\pi$ converges toward the \textit{safest achievable policy} for the system. Here, the term safest achievable policy refers to one which cannot be strictly improved further, i.e.,
\begin{equation}
    \gamma^\pi = \min_{\pi' \in \Pi} \gamma^{\pi'},
\end{equation}
where $\Pi$ denotes the set of all feasible policies for the constrained stochastic system.

A practical way of realizing policy improvement is to perform a pointwise safety improvement by using the update rule
\begin{equation}
    \pi'(x) = \arg \max_{u\in U} \mathcal{A}^u \psi(x),
\end{equation}
which selects the control input that maximizes the $\mathcal{A}^\pi \psi$ value.
Further, for faster convergence, 
one can replace the power iteration step (lines 4 and 5 of \autoref{algo: power-policy iteration}), 
so that it always corresponds to the instantaneous pointwise safest policy. That is, solve
\begin{equation} \label{eq: pde modified}
\begin{aligned}
    \partial_\tau b(\tau,x) &= \max_{u \in U} \mathcal{A}^u b(\tau,x) & \forall (\tau,x) \in [0,t] \times \operatorname{Int} C, \\
    b(0,x) &= \beta(x) & \forall x \in C, \\
    b(\tau,x) &= 0 & \forall (\tau,x) \in [0,t] \times \partial C,
\end{aligned}
\end{equation}
instead of \autoref{eq: pde}.
Such a policy $\pi'$ typically violates the continuity assumptions made earlier, potentially making the solution of \autoref{eq: pde modified} be nonexistent or exist only in the weak sense. Nevertheless, as confirmed in the numerical examples presented in Section~\autoref{sec: numerical examples}, the proposed power-policy iteration shows satisfactory convergence in practice.

\subsection{Remarks on \autoref{assumption: full_rank}} \label{sec: remarks}

The theoretical discussions so far were based on \autoref{assumption: full_rank} which stated that the diffusion term $\sigma(x,u)$ should have full rank for all $x$ and $u$. This key assumption might be unrealistic because the disturbance in many real-world systems only enters part of the state dimensions.
Still, as can be seen in the numerical examples in Section~\autoref{sec: double integrator}, we empirically find that in almost all practical cases, the power-policy iteration behaves the same regardless of whether \autoref{assumption: full_rank} holds.

One thing we lose by relaxing \autoref{assumption: full_rank} is that the dominant eigenfunction no longer has zero values on the boundary of $C$. \autoref{thm: bdry} holds because the Brownian disturbance on the boundary acts in all directions and will almost surely make the system visit the outside of $C$ at least once in an arbitrarily short time horizon. However, one cannot make the same claim if the diffusion term becomes no longer omnidirectional, and the eigenfunction might have nonzero values on some parts of the boundary.

One can also consider the more extreme case where \autoref{assumption: full_rank} is fully relaxed, so that $\sigma(x,u) = 0$ for all $x$ and $u$ and the system becomes deterministic.
Although the same theory no longer applies, $T_t^\pi$ remains a linear operator. Since the safety probability does not decay, $T_t^\pi$ should have $1$ as its dominant eigenvalue (which might have multiplicity unlike \autoref{thm: uniqueness, spectral gap}), as long as $\pi$ renders a nonempty set in $C$ invariant.\footnote{The domain $\mathfrak{D}$ should be redefined to also cover discontinuous functions for the eigenstructure to be well-defined.}
Empirically, we observe in Section~\autoref{sec: double integrator} that the power-policy iteration algorithm converges to the indicator function of the biggest control invariant set that fits within $C$.
While we leave further investigation as a future work, this result can be deemed reasonable and consistent with the stochastic case.
The safety probability $Z^\pi(t,x)$ for a deterministic system is always $1$ inside the invariant set given by $\pi$ and $0$ outside.
Since the policy improvement step of \autoref{algo: power-policy iteration} tries to find a safer policy, it will naturally lead towards maximizing the size of the invariant set.

\section{Numerical Examples} \label{sec: numerical examples}

This section presents the numerical examples along with simulation results of the closed-loop trajectories. In all examples, we use the level set toolbox \cite{mitchell2007toolbox} and the associated helperOC software\footnote{\url{https://github.com/HJReachability/helperOC}. Accessed 2025-12-29.} to solve the PDEs \autoref{eq: pde}, \autoref{eq: pde modified}.

\subsection{Double Integrator under Various Disturbances}
\label{sec: double integrator}

As the first and simplest numerical example, we consider the following double integrator system with noise:
\begin{equation}
    dX_t = \left(\begin{bmatrix}
        0 & 1 \\
        0 & 0
    \end{bmatrix} X_t + \begin{bmatrix}
        0 \\ 1
    \end{bmatrix} u_t \right) dt + \sigma(X_t, u_t) dW_t,
\end{equation}
where the first and second components of the state vector $X_t$ are position and velocity, respectively, $u_t$ is the acceleration input, $\sigma$ is the noise term which potentially depends both on state and input.
The system is assumed to be constrained by the box-shaped safety constraint
\begin{equation}
    X_t \in C = [-1, 1]\times [-2, 2],
\end{equation}
and the input constraint
\begin{equation}
    u_t \in U = [-1, 1].
\end{equation}

For $\sigma$, we consider four cases:
\begin{enumerate}
    \item \textbf{Omnidirectional Constant Diffusion}. We first consider the case where
    \begin{equation}
        \sigma(x,u) = \begin{bmatrix}
            1 & 0 \\ 0 & 1
        \end{bmatrix},
    \end{equation}
    which yields the stochastic system that is compliant with all the assumptions made in the paper, including \autoref{assumption: full_rank}.
    \item \textbf{Disturbance on Velocity Only}. Secondly, following the discussion in Section~\autoref{sec: remarks}, the case where
    \begin{equation}
        \sigma(x,u) = \begin{bmatrix}
            0 \\ 1
        \end{bmatrix}
    \end{equation}
    is considered. Here, the disturbance enters the system through the velocity dimension only and the position dimension can be affected only \textit{indirectly}.
    \item \textbf{Input-Dependent Noise}. Next, we let 
    \begin{equation}
        \sigma(x,u) = \begin{bmatrix}
        0 \\ \sqrt{1 + u^2}
        \end{bmatrix}
    \end{equation}
    where the disturbance term explicitly depends on the input. 
    \item \textbf{Deterministic Double Integrator}. We also look into the case where $\sigma(x,u) = 0$ for all $x$ and $u$, yielding a deterministic double integrator system. 
\end{enumerate}

\begin{figure}
    \centering
    \includegraphics[page=1, width=\linewidth]{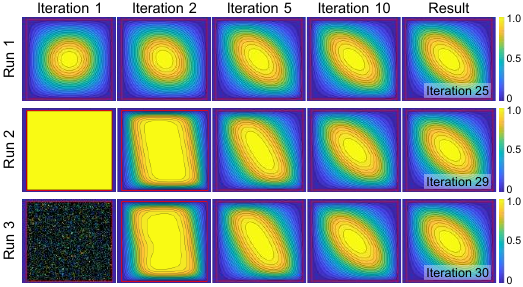}
    \caption{
    Visualizations of $\psi$ during the course of convergence of power-policy iteration in the omnidirectional case (Case 1) of the double integrator example (Section~\autoref{sec: double integrator}).
    In all subfigures, the red boxes represent the safety constraint $C = [-1, 1] \times [-2, 2]$ where the horizontal and vertical axes represent the position and the velocity, respectively.
    Regardless of what we take as the initial guess, power-policy iteration converges to the same dominant eigenfunction.
    }
    \label{fig: double_integrator_full_dim_convergence}
\end{figure}

\begin{figure}
    \centering
    \includegraphics[page=1, width=\linewidth]{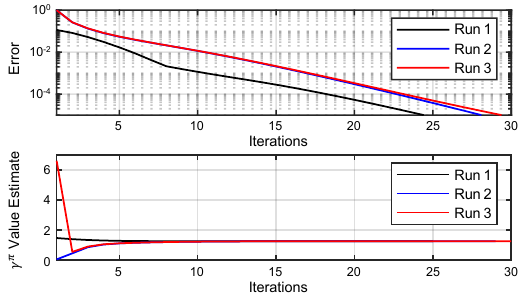}
    \caption{
    (Top) The error between consecutive iterates $\psi$ during the course of convergence. The error is with respect to the Banach space norm.
    (Bottom) The convergence of $\gamma^\pi$ value estimates. It can be seen that the iteration converges to the same value $\gamma^\pi = 1.2424$, regardless of the initial guess.
    }
    \label{fig: double_integrator_e_and_gamma}
\end{figure}

\begin{figure}
    \centering
    \includegraphics[width=\linewidth, page=1]{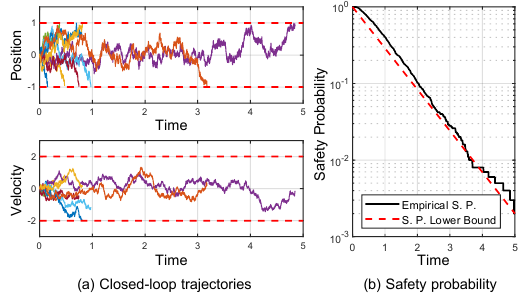}
    \caption{(a) The closed-loop trajectories for the omnidirectional $\sigma$ case (Case 1) of the double integrator example (Section~\autoref{sec: double integrator}). 
    (b) Empirical safety probability and 
    S. P. = Safety Probability.}
    \label{fig: double_integrator_trajectories}
\end{figure}

\begin{figure}
    \centering
    \includegraphics[page=1, width=\linewidth]{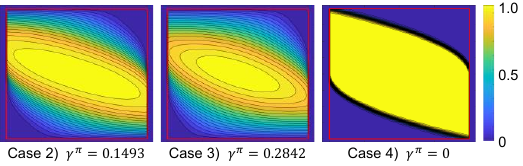}
    \caption{The converged eigenfunctions for cases 2, 3, and 4 in the double integrator example (Section~\autoref{sec: double integrator}). }
    \label{fig: double_integrator_234}
\end{figure}

For case 1, to demonstrate the robustness of convergence of the proposed power-policy iteration, we present the convergence characteristics starting from three different initializations in \autoref{fig: double_integrator_full_dim_convergence} and \autoref{fig: double_integrator_e_and_gamma}. As is clearly observed, all runs converged to the same eigenvalue ($\gamma^\pi = 1.2424$) and eigenfunction, regardless of the initial guess. 
The simulation results of the closed-loop system under the converged backup policy are visualized in \autoref{fig: double_integrator_trajectories}~(a). The estimated safety probability $Z^\pi$ obtained from $1000$ repeated trials, i.e., the fraction of simulated trajectories that have not escaped from $C$, is shown in \autoref{fig: double_integrator_trajectories}~(b). It can be seen that the safety probability is lower bounded by the bound provided by \autoref{thm: scbf prob bound}, and it is further verified that the asymptotic decay rate matches what is given by \autoref{eq: prob approx continuous}.

The converged eigenfunctions for the remaining cases are visualized in \autoref{fig: double_integrator_234}. In cases 2 and 3, unlike the first case where the noise term acts omnidirectionally, the computed eigenfunctions are not valid SCBFs in a strict sense, since they take strictly positive values at some points on the boundary of the set.
Nevertheless, this does not preclude their use as an SCBF, because escaping $C$ through the discontinuity apparent in the results is impossible under the backup policy. At the boundary where the discontinuity occurs, the noise has no component normal to the surface, and the closed-loop drift term points inward. Finally, the result for the deterministic case (case 4), where the iteration converged to the indicator function of the maximal control invariant set within $C$, is consistent with the discussion in Section~\autoref{sec: remarks}.

\subsection{Wing-in-Ground-Effect Aircraft} \label{sec: aircraft}

Now, we look into a more practical example of a lightweight wing-in-ground-effect (WIG) aircraft.\footnote{A WIG is a fixed-wing aircraft that enhances efficiency by operating at low altitudes. It exploits ground effect, an aerodynamic phenomenon occurring in close proximity to the ground surface that compresses the air beneath the wing, reducing induced drag and increasing lift.}
We model its dynamics as a three-dimensional reduced-order model:
\begin{equation}
\begin{aligned}
    dH_t &= V_t \sin \Gamma_t\; dt \\
    dV_t &= \frac{\left( F_t \cos \alpha_t - D_t - mg \sin \Gamma_t \right)dt + \Sigma_D dW^1_t}{m} \\
    d\Gamma_t &= \frac{\left(F_t \sin \alpha_t + L_t - mg \cos \Gamma_t \right) dt + \Sigma_L dW^2_t}{m V_t},
\end{aligned}
\end{equation}
where the three components of the state vector $X_t \coloneqq [H_t, V_t, \Gamma_t]^\top$ represent the altitude, forward velocity, and flight path angle of the aircraft, respectively. 
The constants $m$ and $g$ represent the aircraft's mass and the gravitational acceleration, respectively. 
The control input $u_t = [F^c_t, \alpha_t]^\top$ of this model consists of the thrust command $F^c_t$ and the angle of attack $\alpha_t$. 
We model the propulsion efficiency drop at high airspeed \cite{goyal2024benchmarking} using the relation
\begin{equation}
    F_t = F_t^c \cdot \operatorname{sat} (1-c_F (V_t - V_F), [0, 1]),
\end{equation}
where $c_F$ and $V_F$ are positive constants and the $\operatorname{sat}$ function clips off the excessive amount of the first argument to fit within the specified interval.
While the aircraft experiences and makes use of the non-negligible ground effect, the lift $L_t$ and the drag $D_t$ are modeled as follows:
\begin{equation}
    L_t = \frac{1}{2} \rho S V_t^2 C_L(\alpha_t, H_t/b),\; D_t = \frac{1}{2} \rho S V_t^2 C_L(\alpha_t, H_t/b),
\end{equation}
with $\rho$, $S$, $b$ respectively being the air density, wing area, and wingspan; and
\begin{equation}
\begin{aligned}
    C_L(\alpha, h/b) &= C_{L,\infty}(\alpha) \cdot (1 + c_\mathrm{GE} e^{-k_L h/b}),\\
    C_D(\alpha, h/b) &= C_{D0} + k_i \cdot C_{L,\infty}(\alpha)^2 \cdot (1-e^{-k_Dh/b}),
\end{aligned}
\end{equation}
where $C_{L,\infty}(\alpha) = C_{L0} + C_{L\alpha} \alpha$ is the lift coefficient without ground effect. 
The values of the aerodynamic coefficients ($C_{L0}$, $C_{L\alpha}$, $c_\mathrm{GE}$, $k_L$, $C_{D0}$, $k_i$, $k_D$) are fitted to reasonable values based on \cite{phillips2013lifting, mantle2016induced}.
Two-channel Brownian noise enters the model as uncertainties on the lift and drag forces, where their magnitudes, $\Sigma_L$ and $\Sigma_D$, are set to approximately 5 percent of the lift and drag forces during a steady flight condition.
We summarize the parameter values in \autoref{tab: aircraft_parameter_values}.

\begin{table}[]
    \centering
    \caption{Parameter Values Used in the WIG Example}
    \begin{tabular}{c|c||c|c||c|c}
    \hline
        Name & Value & Name & Value & Name & Value \\
    \hline
        $m$ & $500\;\mathrm{kg}$ & $\rho$ & $1.225 \; \mathrm{kg/m^3}$ & $c_\mathrm{GE}$ & $0.2$ \\
        $g$ & $9.81\;\mathrm{m/s}^2$ & $S$ & $12 \; \mathrm{m^2}$ & $k_L$ & 5 \\
        $\Sigma_L^2$ & $6\times 10^4\;\mathrm{N^2/s}$ & $b$ & $10 \; \mathrm{m}$ & $k_D$ & $5$ \\
        $\Sigma_D^2$ & $156\;\mathrm{N^2/s}$ & $C_{L0}$ & $0.2$ & $k_i$ & $0.05$ \\
        $c_F$ & $0.02\; \mathrm{s/m}$ & $C_{L\alpha}$ & $5 \; \mathrm{/rad}$ & & \\
        $V_F$ & $20 \; \mathrm{m/s}$ & $C_{D0}$ & $0.03$ & & \\
        \hline
    \end{tabular}
    \label{tab: aircraft_parameter_values}
\end{table}

Since WIGs fly at a low altitude, the primary safety concern is to maintain a nonzero clearance from the ground. 
Upon that, we also impose maximum altitude, airspeed, and flight angle constraints so that the vehicle does not excessively deviate from the nominal flight condition. The safety constraints are written as a box constraint as follows:
\begin{equation} \label{eq: aircraft constraints}
\begin{aligned}
    0\;\mathrm{m} &\leq  H_t \leq 10 \; \mathrm{m}, \\
    20 \;\mathrm{m/s} &\leq V_t \leq 70 \; \mathrm{m/s},\\
    -0.15 \; \mathrm{rad} &\leq \Gamma_t \leq 0.15 \; \mathrm{rad}.
\end{aligned}
\end{equation}
We also set upper and lower limits to the thrust command and angle of attack inputs as follows:
\begin{equation}
    0 \; \mathrm{rad} \leq \alpha_t \leq 0.2 \; \mathrm{rad},  \quad
    0 \; \mathrm{N} \leq F_t^c \leq 1000 \; \mathrm{N}.
\end{equation}

The level sets of the computed eigenfunction is drawn in \autoref{fig: aircraft_psi}.
In \autoref{fig: aircraft_traj}, we report 10 closed-loop trajectories starting from the initial condition $X_0 = [5 \;\mathrm{m}, \; 45 \; \mathrm{m/s}, \; 0 \; \mathrm{rad}]^\top$.
The input fed into the system is given as the solution to the SCBF-QP \autoref{eq: scbf-qp} with decay rate $\gamma = 0.01$, and the reference input is set as a constant
\begin{equation}
    \pi_\mathrm{ref}(t,x) = [0.03 \; \mathrm{rad},\; 300 \; \mathrm{N}]^\top.
\end{equation}
While the used dynamics model is not affine in input, the filtered input signals coming from the nonlinear optimization solver are only locally optimal. However, as long as they are feasible solutions to \autoref{eq: scbf-qp}, the probabilistic safety guarantee given by \autoref{thm: scbf prob bound} remains valid.
It can be seen that the closed-loop trajectories stabilize to a safe steady flight.

\begin{figure}
    \centering
    \includegraphics[width=\linewidth, page=2]{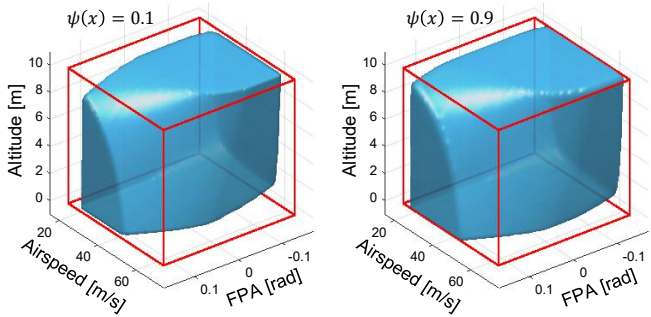}
    \caption{The $0.1$- and $0.9$-level sets of the converged eigenfunction in the WIG example (Section~\autoref{sec: aircraft}). The red skeletons denote the safety constraints \autoref{eq: aircraft constraints}. The computed $\gamma^\pi$ value is $1.2 \times 10^{-4} \; \mathrm{/s}$. FPA = Flight path angle.}
    \label{fig: aircraft_psi}
\end{figure}
\begin{figure}
    \centering
    \includegraphics[width=\linewidth, page=1]{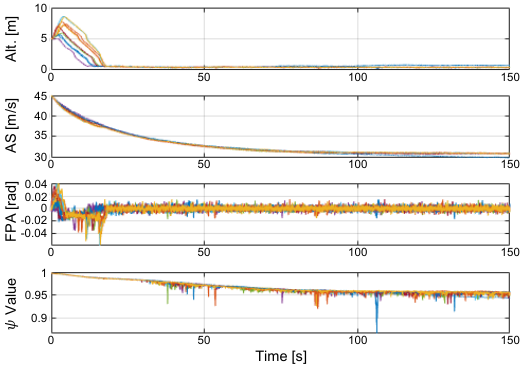}
    \caption{Closed-loop trajectories under the safety-filtered policy of the WIG example (Section~\autoref{sec: aircraft}). The safety-filter was implemented as an SCBF-QP with $\gamma = 0.01$ where the optimization searches for a local optimum that satisfies the SCBF constraints.
    Alt. = Altitude, AS = Airspeed, FPA = Flight path angle.}
    \label{fig: aircraft_traj}
\end{figure}

\subsection{Bicycle Model}
\label{sec: bicycle}

Finally, we test with the bicycle dynamics model written as
\begin{equation}
\begin{aligned}
    dX_t = 
    \begin{bmatrix}
        dx_t \\ dy_t \\ d\theta_t \\ dv_t
    \end{bmatrix} =
    \begin{bmatrix}
        v_t \cos \theta_t \\
        v_t \sin \theta_t \\
        v_t \delta_t \\
        a_t
    \end{bmatrix}dt + \begin{bmatrix}
        0 & 0 \\
        0 & 0 \\
        1/2 & 0 \\
        0 & 1/2
    \end{bmatrix} dW_t,
\end{aligned}
\end{equation}
where the components of the four-dimensional state and two-dimensional input are described as follows.
\begin{equation}
    X_t = \begin{bmatrix}
        x_t\text{ (horizontal position)}\\
        y_t\text{ (vertical position)}\\
        \theta_t\text{ (heading angle)}\\
        v_t\text{ (forward velocity)}
    \end{bmatrix}, \quad
    u_t = \begin{bmatrix}
        \delta_t\text{ (steering)} \\
        a_t\text{ (acceleration)}
    \end{bmatrix}
\end{equation}
In this example, the system is required to avoid the circular obstacle centered at the origin and to comply with the speed limits. Specifically, the safety requirements are written as
\begin{equation}
    x_t^2 + y_t^2 \geq 1, \quad -2 \leq v_t \leq 2.
\end{equation}
The input limits are set as a unit box constraint
\begin{equation}
    -1 \leq \delta_t \leq 1, \quad -1 \leq a_t \leq 1.
\end{equation}

The simulation is done with the reference policy that is designed to track a counter-clockwise circular path of radius $1.5$ around the obstacle at forward velocity $v_t = 1$. The reference policy is safety-filtered by SCBF-QP with $\gamma = 0.15$.
The simulation results are summarized in \autoref{fig: bicycle}. 
While the reference path itself is potentially unsafe due to the disturbance term, it can be seen in \autoref{fig: bicycle}~(b) that the closed-loop system exhibits \textit{backup} maneuvers when it approaches the obstacle. As a result, as apparent in \autoref{fig: bicycle}~(c), the filtered policy exhibits significantly enhanced safety compared to the reference.

\begin{figure}
    \centering
    \includegraphics[width=\linewidth, page=3]{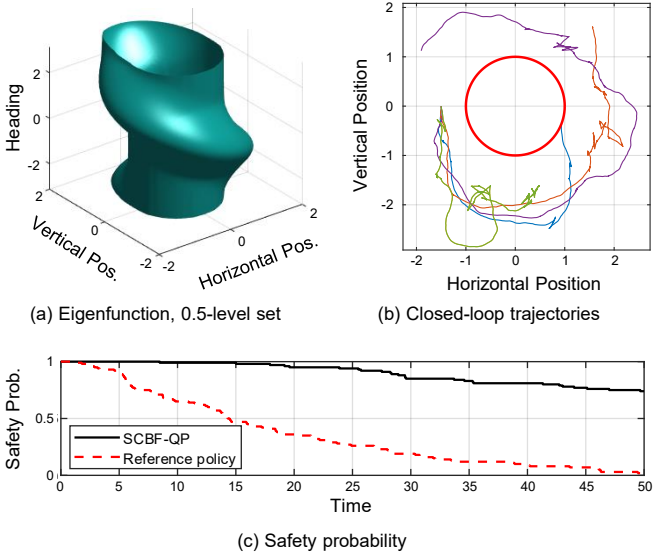}
    \caption{Simulation results for the bicycle model example (Section \autoref{sec: bicycle}. (a) The converged $\psi^\pi$ function. The 0.5-level set of the three-dimensional slice with velocity fixed to $1$ is shown. (b) Closed-loop trajectories under SCBF-QP with decay rate $\gamma = 0.15$. It can be seen that the system uses \textit{backup} maneuvers to avoid colliding into the circular obstacle shown in red. (c) Empirical safety probability for closed-loop system with SCBF-QP compared with the one under the unfiltered reference policy. Pos. = Position, Prob. = Probability.}
    \label{fig: bicycle}
\end{figure}

\section{Conclusion} \label{sec: conclusion}

This paper presents an operator-theoretic framework for synthesizing valid SCBFs. By analyzing the Koopman-like operator governing the evolution of the system's safety probability, we demonstrate that its dominant eigenfunction constitutes a natural and valid SCBF for the system. The values of this dominant eigenfunction explicitly quantify the long-term safety probability of a given state, thereby offering a natural and interpretable safety metric.
The corresponding eigenvalue, called the dominant eigenvalue, describes the global rate at which the safety probability decays and serves as an indicator for the overall safety of the closed-loop system.
We propose the power-policy iteration algorithm, a constructive method for jointly computing this dominant eigenfunction as an SCBF and an optimized backup policy. The validity of this approach are demonstrated through simulations.

Future research directions are suggested as follows.
First, while the dominant eigenpair captures asymptotic safety behavior, it offers limited insight into short-term safety. Future investigations should explore non-dominant eigenmodes, which are typically complex, to provide tighter safety probability estimates over finite horizons.
Second, while the current framework relies on full state observability with infinite bandwidth, extending this to partially observable systems by integrating it with state estimation techniques remains a key objective.
Finally, acknowledging that this method is subject to the curse of dimensionality, a challenge common in PDE-based methods like HJ reachability analysis, future efforts should focus on developing scalable approximation techniques, such as neural-network-based representations \cite{bansal2021deepreach}, for higher-dimensional systems.

\section*{Acknowledgment}
The authors would like to thank Prof. Jason J. Choi (UCLA), Prof. Namhoon Cho (Seoul National University), Prof. Somil Bansal (Stanford University), Dylan Hirsch (UC San Diego), Jonghae Park (Seoul National University), and Sihyun Cho (Seoul National University) for insightful discussions.

\appendices
\section{Preliminaries for the Proofs} \label{proof: preliminaries}
We start by introducing some results from linear operator theory and martingale theory that are used in the proofs.

\begin{lemma}[Riesz-Schauder] \label{thm: riesz-shauder}
    Let $\mathfrak{X}$ be a Banach space and $L: \mathfrak{X} \rightarrow \mathfrak{X}$ a compact linear operator. Then:
    \begin{enumerate}
        \item Its spectrum, $\boldsymbol{\sigma}(L)$, is a countable set with possible accumulation only at $0$.
        \item Every nonzero $\lambda \in \boldsymbol{\sigma}(L)$ is an eigenvalue of finite algebraic multiplicity. The corresponding eigenspace is finite-dimensional.
    \end{enumerate}
\end{lemma}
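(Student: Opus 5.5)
The plan is to follow the classical Riesz theory for a compact operator $L$ on a Banach space $\mathfrak{X}$, organized around the Fredholm alternative for $L_\lambda = L - \lambda\operatorname{Id}$ with $\lambda \neq 0$.

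\textbf{Step 1 (Riesz's lemma).} For $\lambda\neq 0$, show that $\ker L_\lambda$ is finite-dimensional. On $\ker L_\lambda$ we have $x = \lambda^{-1}Lx$, so the identity restricted to $\ker L_\lambda$ is compact. Its closed unit ball is therefore compact, and by Riesz's lemma the space is finite-dimensional. The same argument applies to $\ker L_\lambda^k$ for every $k$, because $L_\lambda^k = (-\lambda)^k\operatorname{Id} + (\text{compact})$.

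\textbf{Step 2 (closed range, injective implies invertible).} Show that $L_\lambda(\mathfrak{X})$ is closed. Pick a closed complement $\mathfrak{M}$ of the finite-dimensional kernel. Suppose no bound $\norm{x}\le c\norm{L_\lambda x}$ holds on $\mathfrak{M}$. Then there are unit vectors $x_n\in\mathfrak{M}$ with $L_\lambda x_n\to 0$. Compactness lets us pass to a subsequence with $Lx_n$ convergent, so $x_n = \lambda^{-1}(Lx_n - L_\lambda x_n)$ converges to a unit vector in $\mathfrak{M}\cap\ker L_\lambda = \{0\}$, a contradiction.

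Next, show that injectivity of $L_\lambda$ forces surjectivity. If not, the ranges $\mathfrak{R}_k = L_\lambda^k(\mathfrak{X})$ form a strictly decreasing chain of closed subspaces. Riesz's lemma then yields unit vectors $y_k\in\mathfrak{R}_k$ with $\operatorname{dist}(y_k,\mathfrak{R}_{k+1})\ge 1/2$. For $k<m$,
\[
Ly_k - Ly_m = \lambda y_k + \bigl(-L_\lambda y_k + L_\lambda y_m - \lambda y_m\bigr),
\]
and the bracketed term lies in $\mathfrak{R}_{k+1}$. Hence $\norm{Ly_k-Ly_m}\ge|\lambda|/2$, which contradicts compactness. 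The bounded inverse then follows from the open mapping theorem.

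\textbf{Consequence for the spectrum.} Every nonzero spectral value is an eigenvalue. Finite algebraic multiplicity follows from a symmetric chain argument on the increasing kernels $\ker L_\lambda^k$: if the chain never stabilized, Riesz's lemma would give a sequence violating compactness. Since each $\ker L_\lambda^k$ is finite-dimensional by Step 1, stabilization means the generalized eigenspace is finite-dimensional. This proves item 2.

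\textbf{Step 3 (item 1).} This is the main obstacle. Fix $\epsilon>0$ and suppose infinitely many distinct eigenvalues $\lambda_n$ satisfy $|\lambda_n|\ge\epsilon$, with eigenvectors $x_n$. Eigenvectors for distinct eigenvalues are linearly independent, so $\mathfrak{E}_n = \operatorname{span}\{x_1,\dots,x_n\}$ is a strictly increasing chain. Each $\mathfrak{E}_n$ is $L$-invariant, and $(L-\lambda_n)\mathfrak{E}_n\subset\mathfrak{E}_{n-1}$. Riesz's lemma gives unit vectors $z_n\in\mathfrak{E}_n$ with $\operatorname{dist}(z_n,\mathfrak{E}_{n-1})\ge 1/2$. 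For $m<n$,
\[
Lz_n - Lz_m = \lambda_n z_n - \bigl[(\lambda_n - L)z_n + Lz_m\bigr],
\]
and the bracketed term lies in $\mathfrak{E}_{n-1}$. Hence $\norm{Lz_n-Lz_m}\ge\epsilon/2$, so $(Lz_n)$ has no convergent subsequence, contradicting compactness. Therefore only finitely many spectral points lie outside each disk of radius $\epsilon$. Taking the union over $\epsilon=1/k$ shows the spectrum is countable with $0$ as its only possible accumulation point.

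The delicate part is setting up these nested-subspace Riesz-lemma arguments correctly, including the verification that the relevant differences land in the lower subspace.
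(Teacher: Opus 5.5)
The paper does not prove this lemma. It quotes it in the appendix as a standard fact of Riesz--Schauder theory and uses it as a black box in the spectral-radius and spectral-gap theorems, where finiteness of $\boldsymbol{\sigma}(L)\cap\{|z|\geq\epsilon\}$ is what produces the gap. Your proposal is the classical Riesz argument and it is correct. In particular, Step 3 gives exactly the form of item 1 that the paper relies on.

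\textbf{One sign slip.} You use the convention $L_\lambda = L-\lambda\operatorname{Id}$, for instance when writing $x_n=\lambda^{-1}(Lx_n-L_\lambda x_n)$. Under that convention, the identity in your range-chain argument should read $Ly_k-Ly_m=\lambda y_k+\bigl(L_\lambda y_k-L_\lambda y_m-\lambda y_m\bigr)$. Your signs on the $L_\lambda$ terms correspond to the convention $\lambda\operatorname{Id}-L$. The bracket lies in $\mathfrak{R}_{k+1}$ under either sign, so nothing downstream changes.

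\textbf{Points worth stating explicitly.}
\begin{enumerate}
\item Closedness of $\mathfrak{R}_k$ for $k\geq 2$, which Riesz's lemma needs, comes from rerunning your closed-range argument on $L_\lambda^k=(-\lambda)^k\operatorname{Id}+(\text{compact})$.
\item Strictness $\mathfrak{R}_{k+1}\subsetneq\mathfrak{R}_k$ uses injectivity of $L_\lambda^k$.
\item You show $L_\lambda$ is bijective with bounded inverse, so $\lambda$ is a regular value in the paper's sense (injective, dense range, bounded inverse on the range). Hence ``every nonzero spectral value is an eigenvalue'' holds under the paper's definitions.
\end{enumerate}
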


\begin{lemma}[Arzel\`a-Ascoli] \label{thm: arzela-ascoli}
    Let $\Phi$ be a set of real-valued functions defined on a compact subspace of $\mathbb{R}^n$. If $\Phi$ is uniformly bounded (meaning that $\norm{\phi} \leq M$ for all $\phi \in \Phi$ with an $M\geq 0$ not depending on $\phi$) and uniformly equicontinuous (meaning that for every $\epsilon > 0$, there exists a $\delta > 0$ not depending on $x, y$ such that $\norm{x - y} < \delta \Rightarrow |\phi(x) - \phi(y)| < \epsilon$ for all $\phi \in \Phi$), then $\Phi$ is relatively compact.
\end{lemma}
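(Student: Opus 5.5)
We will prove the statement in the form needed later: every sequence $\{\phi_k\}_{k\in\mathbb{N}} \subseteq \Phi$ has a subsequence that converges uniformly, i.e., in the norm $\norm{\cdot}$ of \autoref{eq: banach norm}, to a continuous function on the compact set, which we call $K$ here. Since the space of continuous functions on $K$ with the supremum norm is a complete metric space, sequential compactness of the closure is equivalent to relative compactness, so this suffices. The strategy is the classical diagonal argument. First we get pointwise convergence on a countable dense set. Then equicontinuity upgrades it to uniform convergence.

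First, $K \subseteq \mathbb{R}^n$ is separable, so we fix a countable dense subset $Q = \{q_1, q_2, \dots\} \subseteq K$; for example, take points of $K$ chosen near rational points. At $q_1$ the real sequence $\{\phi_k(q_1)\}$ lies in $[-M, M]$ by uniform boundedness. By Bolzano--Weierstrass it therefore has a convergent subsequence. From that subsequence we extract a further one converging at $q_2$, and so on. Taking the diagonal sequence $\{\phi_{k_j}\}$ (the $j$-th element of the $j$-th subsequence) yields a subsequence that converges at every $q_i \in Q$.

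Second, we show that $\{\phi_{k_j}\}$ is uniformly Cauchy. Fix $\epsilon > 0$ and let $\delta$ be the uniform equicontinuity modulus for $\epsilon/3$. By compactness, $K$ is covered by finitely many balls of radius $\delta$. Because $Q$ is dense, we may pick one point of $Q$ in each ball; call them $q_{i_1}, \dots, q_{i_m}$, so that every $x \in K$ lies within $\delta$ of some $q_{i_l}$. Choose $N$ such that $|\phi_{k_j}(q_{i_l}) - \phi_{k_{j'}}(q_{i_l})| < \epsilon/3$ for all $j, j' \geq N$ and all $l \leq m$; this is possible because there are only finitely many points. For any $x \in K$, the triangle inequality through the nearby $q_{i_l}$ gives
\begin{equation*}
|\phi_{k_j}(x) - \phi_{k_{j'}}(x)| < \epsilon .
\end{equation*}
Hence $\norm{\phi_{k_j} - \phi_{k_{j'}}} < \epsilon$, and the subsequence is uniformly Cauchy.

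Finally, completeness gives a uniform limit $\phi$, which is continuous as a uniform limit of continuous functions. Equivalently, passing to the limit in the equicontinuity estimate shows directly that $\phi$ is continuous. So every sequence in $\Phi$ has a convergent subsequence, and the closure of $\Phi$ is compact.

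The only delicate step is the second one. There, the uniformity of $\delta$ over both $x$ and $\phi$, combined with the finiteness of the $\delta$-net (this is where compactness of $K$ enters), lets a single index $N$ work for all $x$. The diagonal extraction itself is routine.
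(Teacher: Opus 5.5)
The paper gives no proof of this lemma. It quotes the classical Arzel\`a--Ascoli theorem and invokes it only for the compactness claim in \autoref{thm: T properties}. Your argument is the standard textbook proof: diagonal extraction on a countable dense set, an upgrade to a uniformly Cauchy subsequence via equicontinuity and a finite net, then completeness of the space of continuous functions on $K$. Its structure is correct.

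There is one slip to fix in your second step. If $x$ and $q_{i_l}$ only lie in the same ball of radius $\delta$, you get $\|x-q_{i_l}\|<2\delta$, not $<\delta$. So the estimate $|\phi(x)-\phi(q_{i_l})|<\epsilon/3$ does not follow from your choice of $\delta$. Either cover $K$ by balls of radius $\delta/2$, or argue directly: density of $Q$ makes $\{B(q,\delta)\}_{q\in Q}$ an open cover of $K$, and compactness gives a finite subcover $B(q_{i_1},\delta),\dots,B(q_{i_m},\delta)$.

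Your opening reduction also needs one clarification. It uses the metric-space fact that if every sequence in $\Phi$ has a subsequence converging in the ambient space, then $\overline{\Phi}$ is sequentially compact; to see this, approximate a sequence in $\overline{\Phi}$ by one in $\Phi$. Completeness is needed only to produce the limit of your Cauchy subsequence, not for that equivalence.
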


\begin{lemma}[Burkholder-Davis-Gundy Inequality {\cite[Chapter IV]{protter2005stochastic}}] \label{lemma: bdg}
    Let $M_t$ be a local martingale starting at zero. Let $\left<M\right>_t$ be the quadratic variation of $M_t$ defined as $\left<M\right>_t \coloneqq \int_0^t (dM_t)^2$. For any $p\geq 1$,
    \begin{equation}
    \mathbb{E}\left[\left(\sup_{\tau \in [0,t]} |M_\tau|\right)^p\right] \leq C\cdot \mathbb{E}\left[\left<M\right>_t^{p/2}\right],
    \end{equation}
    with $C$ being a constant that depends only on the choice of $p$ and not on the value of $M_t$ or the choice of $t$.
\end{lemma}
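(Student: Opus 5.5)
The plan is to follow the standard route for continuous local martingales. This is the only case used later, since every $M_t$ arising in the paper is an It\^o integral of the form $\int_0^t b(s,X_s)\,dW_s$ and so has continuous paths and $\left<M\right>_t=\int_0^t |b(s,X_s)|^2 ds$. We write $M^*_t \coloneqq \sup_{\tau\in[0,t]}|M_\tau|$.

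\textbf{Localization.} First we reduce to bounded martingales. Let $\tau_n \coloneqq \inf\{s : |M_s| \geq n \text{ or } \left<M\right>_s \geq n\}$. The stopped process $M^{\tau_n}$ is a bounded martingale with $\left<M^{\tau_n}\right> = \left<M\right>^{\tau_n}$, so all moments below are finite. Once the inequality holds for $M^{\tau_n}$ with a constant independent of $n$, we let $n\to\infty$ and apply monotone convergence to $M^*_{t\wedge\tau_n}$ and $\left<M\right>_{t\wedge \tau_n}$. This gives the general statement.

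\textbf{Case $p\geq 2$.} Apply It\^o's lemma (\autoref{lemma: ito}) to the $C^2$ function $\varphi(x)=|x|^p$. Taking expectations kills the stochastic integral, because $M$ is bounded, and gives
\begin{equation*}
\mathbb{E}|M_t|^p = \tfrac{p(p-1)}{2}\,\mathbb{E}\!\int_0^t |M_s|^{p-2} d\left<M\right>_s \leq \tfrac{p(p-1)}{2}\,\mathbb{E}\!\left[(M^*_t)^{p-2}\left<M\right>_t\right].
\end{equation*}
Next apply H\"older's inequality with exponents $p/(p-2)$ and $p/2$:
\begin{equation*}
\mathbb{E}\!\left[(M^*_t)^{p-2}\left<M\right>_t\right] \leq \left(\mathbb{E}(M^*_t)^p\right)^{(p-2)/p}\left(\mathbb{E}\left<M\right>_t^{p/2}\right)^{2/p}.
\end{equation*}
Doob's $L^p$ maximal inequality gives $\mathbb{E}(M^*_t)^p\leq (p/(p-1))^p\,\mathbb{E}|M_t|^p$. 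Chaining these three bounds and dividing by the finite quantity $\left(\mathbb{E}(M^*_t)^p\right)^{(p-2)/p}$ yields $\mathbb{E}(M^*_t)^p \leq C_p\,\mathbb{E}\left<M\right>_t^{p/2}$. The constant $C_p$ depends only on $p$. For $p=2$ the argument reduces directly to Doob's inequality combined with the It\^o isometry $\mathbb{E}M_t^2=\mathbb{E}\left<M\right>_t$.

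\textbf{Case $1\leq p<2$.} This is the step I expect to be the main obstacle, because H\"older's inequality and Doob's inequality now point the wrong way. I would use Lenglart's domination inequality. For every bounded stopping time $\sigma$, Doob's inequality and the It\^o isometry give
\begin{equation*}
\mathbb{E}\!\left[(M^*_\sigma)^2\right] \leq 4\,\mathbb{E}\left<M\right>_\sigma,
\end{equation*}
so the continuous adapted process $(M^*)^2$ is dominated by the continuous increasing process $4\left<M\right>$. Lenglart's lemma then gives, for every $q\in(0,1)$,
\begin{equation*}
\mathbb{E}\!\left[(M^*_t)^{2q}\right] \leq \frac{2-q}{1-q}\,\mathbb{E}\!\left[(4\left<M\right>_t)^q\right].
\end{equation*}
Taking $q=p/2$ gives the claim for $p\in[1,2)$. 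If one prefers not to quote Lenglart, the fallback is a good-$\lambda$ argument. One compares $\mathbb{P}(M^*_t>2\lambda,\ \left<M\right>_t^{1/2}\leq\delta\lambda)$ with $\mathbb{P}(M^*_t>\lambda)$ by restarting at the hitting time of level $\lambda$ and applying the $L^2$ bound. One then integrates against $p\lambda^{p-1}d\lambda$.

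In all cases the constant $C$ depends only on $p$, and not on $t$ or on $M$, as the statement requires.
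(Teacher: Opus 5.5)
The paper gives no proof of this lemma; it cites Protter. Your argument is correct for \emph{continuous} local martingales and is the standard textbook proof. You localize so that $M$ and $\left<M\right>$ are bounded. For $p\geq 2$ you combine It\^o's formula for $|x|^p$, Doob's $L^p$ inequality and H\"older, which is the classical Burkholder argument. For $1\leq p<2$ you apply Lenglart's domination inequality to $(M^*)^2$ dominated by $4\left<M\right>$ with $q=p/2$, which is the route in Revuz--Yor. Monotone convergence then removes the localization, and every constant depends only on $p$.

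What the citation buys over your proof is generality. The lemma as literally stated, and Protter's theorem, cover arbitrary c\`adl\`ag local martingales, with $\left<M\right>$ read as the square bracket $[M]$. Your argument does not reach that case, for three reasons:
\begin{itemize}
\item stopping at $\inf\{s:|M_s|\geq n\}$ no longer gives a bounded process, because of overshoot at jumps;
\item It\^o's formula acquires jump terms;
\item Lenglart's inequality needs a predictable (e.g.\ continuous) dominating process, which $[M]$ no longer is, and the case $p=1$ is Davis' inequality.
\end{itemize}
The paper applies the lemma only to continuous It\^o integrals $\int_0^t b(X_s)\,dW_s$, in the proofs of the generator-identification and boundary theorems. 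So your restriction, which you flag yourself, loses nothing that is actually used.

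Two minor points. For the stopped process $M^{\tau_n}$ you need the general It\^o formula for continuous semimartingales, not the diffusion version stated in the paper. Dividing by $\bigl(\mathbb{E}(M^*_t)^p\bigr)^{(p-2)/p}$ also presumes this quantity is nonzero; the zero case is trivial.
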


\section{Proofs for Theorems in Section \autoref{sec: spectral approach}}

\subsection{Proof of \autoref{thm: T properties}} \label{proof: T properties}

\begin{proof}[Proof of claim 1)]
    Linearity and positivity directly follows from \autoref{eq: T integral}.
    We get non-expansiveness from
    \begin{multline}
        \left| T_t^\pi \beta(x) \right| = \left|\int_C \beta(y) P_t^\pi (x,y)\right| dy \\
        \leq \int_C \left|\beta(y)\right| P_t^\pi (x, y)dy
        \leq \norm{\beta} \int_C P_t^\pi (x, y) dy \leq \norm{\beta}
    \end{multline}
    for any $x \in C$ and $\beta\in \mathfrak{D}_0$. 
\end{proof}

\begin{proof}[Proof of claim 2)]
    We use \autoref{thm: arzela-ascoli}. Consider a bounded set $\Phi \subseteq \mathfrak{D}_0$ such that $\norm{\phi} \leq c$ for all $\phi \in \Phi$. The first claim directly implies $T_t^\pi \Phi$ is bounded, and for all $\phi \in \Phi$,
    \begin{equation}
    \begin{aligned}
        \left|T_t^\pi \phi(x) - T_t^\pi \phi(x')\right|
        &\leq \int_C |\phi(y)||P_t^\pi(x,y) - P_t^\pi(x',y)| dy \\
        &\leq c \cdot \int_C |P_t^\pi(x,y) - P_t^\pi(x',y)| dy.
    \end{aligned}
    \end{equation}
    Since $P_t^\pi$ is continuous under \autoref{assumption: full_rank}, $T_t^\pi \Phi$ is uniformly equicontinuous (\autoref{thm: arzela-ascoli}), making $T_t^\pi$ a compact operator.
\end{proof}

\begin{proof}[Proof of claim 3)]
    It is clear from the definition \autoref{eq: T definition} that $T_0^\pi = \operatorname{Id}$, regardless of the choice of $\pi$. To show the latter, first suppose $x \in C$. Then,
    \begin{equation}
    \begin{aligned}
        T_{t+s}^\pi \beta (x) &= \mathbb{E}^\pi [\beta(X_{t+s}^\dagger)|X_0^\dagger =x] \\
        &= \mathbb{E}^\pi \left[\mathbb{E}^\pi [\beta(X_{t+s}^\dagger)|X_0^\dagger = x, X_s^\dagger] \middle| X_0^\dagger = x\right] \\
        &= \mathbb{E}^\pi \left[\mathbb{E}^\pi [\beta(X_{t+s}^\dagger)|X_s^\dagger] \middle| X_0^\dagger = x\right] \\
        &= \mathbb{E}^\pi [T_t^\pi \beta(X_{s}^\dagger)|X_0^\dagger = x] = (T_s^\pi \circ T_t^\pi) \beta(x),
    \end{aligned}
    \end{equation}
    where we used the strong Markov property and time homogeneity of the killed process $X_t^\dagger$.
    If $x \notin C$, then $T_{t+s}^\pi \beta(x) = 0 = T_t^\pi (T_s^\pi \beta) (x)$.
\end{proof}

\subsection{Proof of \autoref{thm: spectral radius}} \label{proof: spectral radius}

\begin{proof}
    Since $T_t^\pi$ is a compact positive operator, we can apply \autoref{thm: riesz-shauder} and the Krein-Rutman theorem \cite{krein_rutman}, which is a generalization of Perron-Frobenius theorem\footnote{If every element of a square matrix is positive real, then its dominant (biggest in absolute value) eigenvalue is positive real and the corresponding eigenvector is positive.} to infinite-dimensional positive linear operators, to show that $r_t^\pi > 0$ and it is an eigenvalue whose corresponding eigenfunction is nonnegative everywhere on $C$. Since $T_t^\pi$ is non-expansive, $r_t^\pi$ cannot be greater than $1$.
\end{proof}

\subsection{Proof of \autoref{thm: uniqueness, spectral gap}} \label{proof: uniqueness, spectral gap}

\begin{proof}
    Let $(\lambda, \phi)$ be an eigenpair such that $|\lambda| = r_t^\pi$. Consider $|\phi|$ defined as $|\phi|(x) = |\phi(x)|$ for all $x$. Then, for all $x \in C$,
    \begin{multline}
        r_t^\pi |\phi|(x) = |\lambda \phi (x)| = |T_t^\pi \phi (x)| = 
        \left|\int_C \phi(y) P_t^\pi (x,y)dy\right| \\
        \leq \int_C |\phi|(y) P_t^\pi (x,y) dy = T_t^\pi |\phi|(x).
    \end{multline}
    This implies $r_t^\pi |\phi| = T_t^\pi |\phi|$: otherwise, it contradicts that $r_t^\pi$ is the spectral radius. That is,
    \begin{equation} \label{eq: integral inequality}
        \left|\int_C \phi(y) P_t^\pi (x,y)dy\right| = \int_C |\phi|(y) P_t^\pi (x,y) dy
    \end{equation}
    for all $x$. Since $P_t^\pi(x,y)$ is positive for every $(x, y) \in \operatorname{Int} C \times \operatorname{Int} C$, the only possibility of \autoref{eq: integral inequality} being satisfied is $\phi = z\cdot |\phi|$ for a constant $z \in \mathbb{C}$, and thus we can let $\phi$ be a nonnegative real eigenfunction, and moreover, $\lambda = r_t^\pi$. Thus, $r_t^\pi$ is the only eigenvalue having modulus $r_t^\pi$. Combined with \autoref{thm: riesz-shauder}, we conclude that there exists a strictly nonzero spectral gap.

    Now, we will show that $r_t^\pi$ has multiplicity $1$, i.e., $\phi = c \psi_t^\pi$ for a constant $c$. Pick $x_1, x_2 \in C$. Let $f$ be defined as
    \begin{equation}
        f(x) \coloneqq (\phi(x_1) + \phi(x_2)) \psi_t^\pi (x) - (\psi_t^\pi (x_1) + \psi_t^\pi (x_2)) \phi(x).
    \end{equation}
    One can find that $f(x_1) = \phi(x_2) \psi_t^\pi (x_1) - \phi(x_1) \psi_t^\pi (x_2) = -f(x_2)$, and that $T_t^\pi f = r_t^\pi f$.
    However, following the same flow, we find that $f$ should have the same sign everywhere on $C$, meaning that $\phi(x_2) \psi_t^\pi (x_1) - \phi(x_1) \psi_t^\pi (x_2) = 0$. Since this should hold for every $x_1$ and $x_2$, we conclude that $\phi$ and $\psi_t^\pi$ are linearly dependent, i.e., $\phi$ is a scalar multiplication of $\psi_t^\pi$.
\end{proof}

\subsection{Proof of \autoref{thm: inf_gen of T}} \label{proof: inf_gen of T}
\begin{proof}
    Since $x \in \operatorname{Int} C$, there exists a closed ball $B(x, R)$ with positive radius $R$ centered at $x$ such that $B(x, R) \subseteq \operatorname{Int} C$.
    For notational brevity, let $E_t$ denote the event that $X_\tau \notin B(x,R)$ for some $\tau \in [0,t]$.
    Since $X_t = X_t^\dagger$ given $\neg E_t$, we can say
    \begin{equation} \label{eq: killed expectation}
    \begin{aligned}
        &\mathbb{E}^\pi [\beta(X_t^\dagger)|X_0=x] \\
        &= \mathbb{E}^\pi [\beta(X_t^\dagger)|E_t, X_0=x]\mathbb{P}^\pi [E_t|X_0=x] \\
        & \quad {} + \mathbb{E}^\pi [\beta(X_t^\dagger)|\neg E_t, X_0=x](1-\mathbb{P}^\pi [E_t|X_0=x])\\
        &= \mathbb{E}^\pi [\beta(X_t^\dagger)|E_t, X_0=x]\mathbb{P}^\pi [E_t|X_0=x] \\
        & \quad {} + \mathbb{E}^\pi [\beta(X_t)|\neg E_t, X_0=x](1-\mathbb{P}^\pi [E_t|X_0=x])
    \end{aligned}
    \end{equation}
    and
    \begin{equation} \label{eq: original expectation}
    \begin{aligned}
        &\mathbb{E}^\pi [\beta(X_t)|X_0=x] \\
        &= \mathbb{E}^\pi [\beta(X_t)|E_t, X_0=x]\mathbb{P}^\pi [E_t|X_0=x] \\
        & \quad {} + \mathbb{E}^\pi [\beta(X_t)|\neg E_t, X_0=x](1-\mathbb{P}^\pi [E_t|X_0=x]),
    \end{aligned}
    \end{equation}
    and hence
    \begin{multline}
        \mathbb{E}^\pi [\beta(X_t^\dagger)|X_0=x] - 
        \mathbb{E}^\pi [\beta(X_t)|X_0=x] \\
        = \mathbb{P}^\pi [E_t|X_0=x]\cdot (\cdots),
    \end{multline}
    where $(\cdots)$ is a finite quantity.
    
    Therefore, to claim $\mathcal{A}^\pi \beta(x) = A^\pi \beta(x)$, we will show
    \begin{equation}
        \mathbb{P}^\pi [E_t|X_0=x] = o(t),
    \end{equation}
    so that the difference between the two expectations \autoref{eq: killed expectation} and \autoref{eq: original expectation} vanishes faster than $t$ as it approaches zero.

    Pick a unit vector $\hat{n}$, and let $\hat{n}\cdot (X_t - x) = A_t + M_t$ where $A_t = \hat{n}\cdot \int_0^t f(X_t, \pi(X_t))dt$ and $M_t = \hat{n}\cdot \int_0^t \sigma(X_t, \pi(X_t))dW_t$, so that $M_t$ is a local martingale. Since we have assumed Lipschitz continuous $f$, $\sigma$ and $\pi$, their values are globally bounded in the compact domain $C$, and thus
    \begin{equation}
        \sup_{\tau \in [0,t]} |A_\tau| \leq \overline{\mu}t, \quad 
        \sup_{\tau \in [0,t]} \left<M\right>_\tau \leq \overline{s}t
    \end{equation}
    with probability $1$ for some positive constants $\overline{\mu}$ and $\overline{s}$.
    Applying Markov's inequality and \autoref{lemma: bdg} to the martingale part, we get
    \begin{equation}
    \begin{aligned}
        \mathbb{P}\left[\sup_{\tau\in[0,t]} |M_\tau| \geq m \right] &\leq \frac{1}{m^p} \cdot \mathbb{E}\left[\sup_{\tau \in [0,t]} |M_\tau|^p\right] \\
        & \leq \frac{c}{m^p} \cdot \mathbb{E}\left[\left<M\right>_t^{p/2}\right] \\
        & \leq \frac{c}{m^p} \cdot (\overline{s}t)^{p/2}
    \end{aligned}
    \end{equation}
    for any $m > 0$, and therefore
    \begin{equation} \label{eq: inequality directional}
    \begin{aligned}
        \mathbb{P}\left[\sup_{\tau \in [0,t]}\left|\hat{n}\cdot (X_\tau-x)\right|\geq R\right] &\leq \mathbb{P}\left[\sup_{\tau \in [0,t]} |M_\tau| \geq R-\overline{\mu}t\right] \\
        &\leq \frac{c\cdot (\overline{s}t)^{p/2}}{(R-\overline{\mu}t)^p} = o(t^P)
    \end{aligned}
    \end{equation}
    for any $P > p/2$.
    We can pick $p>2$, and since \autoref{eq: inequality directional} should hold for every possible unit vector $\hat{n}$, $X_t$ escapes the ball $B(x,R)$ with probability not greater than $o(t)$.
\end{proof}

\subsection{Proof of \autoref{thm: bdry}} \label{proof: bdry}

\begin{proof}
Recall \autoref{assumption: C} and first consider the case where $x$ is on the \textit{smooth} part of $\partial C$. This means that there exists a smooth function $\varphi$ defined on an open neighborhood of $x$ such that $\varphi(x) = 0$, $\partial_x \varphi(x) \neq 0$, and $\varphi(y) \leq 0$ iff $y \in C$ and $\varphi(y) = 0$ iff $y \in \partial C$ wherever $\varphi(y)$ is defined. Due to \autoref{assumption: full_rank}, we have $\partial_x \varphi(x) \cdot \sigma(x) \neq 0$.

Now, consider the original closed-loop dynamics
\begin{equation}
    dX_t = f(X_t, \pi(X_t)) dt + \sigma(X_t, \pi(X_t)) dW_t
\end{equation}
starting at $X_0=x$, and define $Y_t \coloneqq \varphi(X_t)$.
Through It\^o's lemma (\autoref{lemma: ito}), we see that $Y_t$ satisfies the SDE
\begin{equation}
\begin{aligned}
    dY_t &= \xi(X_t) dt + c (X_t) dW_t, \\
    Y_0 &= \varphi(X_0) = \varphi(x) = 0
\end{aligned}
\end{equation}
and with $c(y) \neq 0$ for all $y$ in an open neighborhood of $x$.

We will show a stronger claim:
Let $c$ be a continuous function such that $c(x) \neq 0$, and
\begin{equation}
    Z_t \coloneqq \int_0^t c(X_s) dW_s.
\end{equation}
Then for any $a > 0$,
\begin{equation} \label{eq: proof7 Z limit}
    \lim_{t \searrow 0} \mathbb{P} \left[\min_{\tau \in [0,t]} Z_\tau \geq -at \right] \rightarrow 0.
\end{equation}
Letting $a \geq \max_{x \in C} |\xi(x)|$ leads to the original claim.

This $Z_t$ can be decomposed into the Brownian part $B_t$ and the error part $M_t$ as 
\begin{equation}
    Z_t = B_t + M_t,
\end{equation}
where
\begin{align}
    B_t &= \int_0^t c(x) dW_\tau = c(x) W_t, \\
    M_t &= \int_0^t (c(X_\tau) - c(x)) dW_\tau.
\end{align}

Because $X_{(\cdot)}$ is continuous and $X_t \rightarrow x$ as $t \searrow 0$ almost surely, and $c$ is continuous, for any $\epsilon > 0$, there exists $\eta > 0$ such that $\norm{y - x} \leq \eta$ implies $|c(y) - c(x)| \leq \epsilon$. 
Define $E_t$ as the event such that $\sup_{\tau \in [0,t]} \norm{X_\tau - x} \leq \eta$. On $E_t$, $c(X_\tau)$ does not deviate more than $\epsilon$ from $c(x) = 1$ for all $\tau \in [0,t]$.

Note that, by continuity of $X_{(\cdot)}$,
\begin{equation}
    \mathbb{P}[E_t] = 1 - \mathbb{P}\left[\sup_{\tau \in [0,t]} \norm{X_\tau - x} > \eta \right] \rightarrow 1
\end{equation}
as $t \searrow 0$, i.e., For any $\epsilon' > 0$, we can select a positive $t$ such that $\mathbb{P}[E_t] \geq 1-\epsilon'$.

Now, we look at the quadratic variation of the error term $M_{(\cdot)}$:
\begin{equation}
    \left<M\right>_t = \int_0^t \left(c(X_\tau) - c(x)\right)^2 d\tau \leq \epsilon^2 t \quad \text{on $E_t$.}
\end{equation}
Let $\overline{M}_t \coloneqq \sup_{\tau \in [0,t]} |M_\tau|$. Conditioned on $E_t$, 
\begin{equation} \label{eq: E_t probability}
    \mathbb{E}\left[\overline{M}_t^2 \right] \leq C \cdot \mathbb{E}\left[ \left<M\right>_t \right] \leq C \cdot \epsilon^2 t
\end{equation}
where $C$ is a positive universal constant not depending on the choice of $\epsilon$ or $t$. Since $E_t$ occurs with arbitrary high probability as $t$ approaches zero, for any $\alpha > 0$, we get
\begin{equation} \label{eq: proof7 prob_bdry}
\begin{aligned}
    \mathbb{P}\left[\overline{M}_t \geq \alpha \sqrt{t}\right]
    &\leq \mathbb{P}\left[\overline{M}_t^2 \geq \alpha^2 t, E_t \right] + \mathbb{P}[\neg E_t] \\
    &\leq \frac{1}{\alpha^2 t} \mathbb{E}\left[\overline{M}_t^2 \right] + \epsilon' \leq \frac{C\epsilon^2}{\alpha^2} + \epsilon'
\end{aligned}
\end{equation}
for sufficiently small but positive $t$. Since \autoref{eq: proof7 prob_bdry} should hold for arbitrary $\epsilon$ and $\epsilon'$, 
\begin{equation} \label{eq: proof7 overline_M limit}
    {\overline{M}_t}/{\sqrt{t}} \rightarrow 0
\end{equation}
in probability as $t \searrow 0$.

For the Wiener process $W_t$, the reflection principle gives, for all $b \geq 0$,
\begin{equation} \label{eq: proof7 reflection principle}
    \mathbb{P}\left[\min_{\tau \in [0,t]} W_\tau \geq -b\right] = \mathbb{P}\left[ |W_t| \leq b \right] = \operatorname{erf}(b/ \sqrt{t}),
\end{equation}
where $\operatorname{erf}(y) \coloneqq \int_{-y}^y \frac{1}{\sqrt{2\pi}} e^{z^2/2} dz$ is the probability of a unit Gaussian lying within the interval $[-y, y]$.

Fix a $\alpha > 0$. Then,
\begin{multline}
    \mathbb{P}\left[\min_{\tau \in [0,t]} Z_\tau \geq -at\right] \leq \mathbb{P} \left[\min_{\tau \in [0,t]} B_\tau -at - \overline{M}_t \right] \\
    \leq \underbrace{\mathbb{P}\left[\min_{\tau \in [0, t]} W_\tau \geq -\frac{at + \alpha \sqrt{t}}{c(x)} \right]}_{\text{(A)}} + \underbrace{\mathbb{P}\left[\overline{M}_t \geq \alpha \sqrt{t} \right]}_{\text{(B)}}.
\end{multline}
We take the limit $t \searrow 0$. For $\text{(A)}$, combining with \autoref{eq: proof7 reflection principle} gives
\begin{equation} \label{eq: proof7 A limit}
    \lim_{t \searrow 0}\text{(A)} = \operatorname{erf}\left(\frac{\alpha}{c(x)}\right).
\end{equation}
On the otherh and, for $\text{(B)}$, we already have 
\begin{equation} \label{eq: proof7 B limit}
    \lim_{t \searrow 0} \text{(B)} = 0
\end{equation}
from \autoref{eq: proof7 overline_M limit}.
While \autoref{eq: proof7 A limit} and \autoref{eq: proof7 B limit} should hold for any $\alpha > 0$, we can select an arbitrarily small $\alpha$ to get \autoref{eq: proof7 Z limit}.

When $x$ is not on the \textit{smooth} part of the boundary, still, since the boundary is piecewise smooth and $T_t \beta$ should be continuous, we have $T_t \beta (x) = 0$ for all $x \in \partial C$.
\end{proof}

\section{Proofs for Theorems in Sections~\autoref{sec: eigenfunction as SCBF} and \autoref{sec: power-policy iteration}}

\subsection{Proof of \autoref{thm: eigenstructure}} \label{proof: eigenstructure}

\begin{proof}
    \autoref{thm: T properties} states that $T^\pi$ is a semigroup and the operators commute: $T_t^\pi \circ T_s^\pi = T_{t+s}^\pi = T_s^\pi \circ T_t^\pi$. Thus,
    \begin{equation}
        T_s^\pi (T_t^\pi \psi^\pi) = T_t^\pi (T_s^\pi \psi^\pi) = T_t^\pi (r_s^\pi \psi^\pi) = r_s^\pi \cdot T_t^\pi \psi^\pi,
    \end{equation}
    which means $T_t^\pi \psi^\pi$ should also be the dominant eigenfunction of $T_s^\pi$. Since $T_s^\pi$ has a unique dominant eigenfunction up to scale, this means $T_t^\pi$ is a scalar multiple of $\psi^\pi$, i.e.,
    \begin{equation}
        T_t^\pi \psi^\pi = c_t \psi^\pi
    \end{equation}
    for some $c_t > 0$. Here, we can say $c_t$ is positive real because $T_t^\pi \psi^\pi$ and $\psi^\pi$ should have same signs.

    Due to the Krein-Rutman theorem \cite{krein_rutman}, besides the dominant eigenfunction, $T_t^\pi$ has a unique positive dominant eigenmeasure (i.e., the \textit{dual} eigenvector) $\varphi:\mathfrak{D}\rightarrow \mathbb{R}$ satisfying
    \begin{equation}
        \varphi(T_t^\pi \beta) = r_t^\pi \cdot \varphi(\beta)
    \end{equation}
    for all $\beta \in \mathfrak{D}$, and
    \begin{equation}
        \varphi(\beta) > 0
    \end{equation}
    for all positive $\beta$.
    Since $\psi^\pi$ is positive,
    \begin{equation}
        r_t^\pi \cdot \varphi(\psi^\pi) = \varphi(T_t^\pi \psi^\pi) = \varphi(c_t \psi^\pi) = c_t \cdot \varphi(\psi^\pi).
    \end{equation}
    Since $\varphi(\psi^\pi)$ should be positive, we conclude that $c_t = r_t^\pi$.

    Given that the dominant eigenfunction remains unchanged with respect to time, it is easy to find that the dominant eigenvalues also inherit the same semigroup structure because
    \begin{equation} \label{eq: eigval_semigroup}
        r_{\tau+t}^\pi \psi^\pi = T_{\tau+t}^\pi \psi^\pi = r_\tau^\pi T_t^\pi \psi^\pi = r_\tau^\pi r_t^\pi \psi^\pi
    \end{equation}
    for all positive $\tau$ and $t$.
    One can also see that $r_t^\pi$ is monotone with respect to $t$, since the spectral radius can never become bigger than $1$ (see \autoref{thm: spectral radius}) and
    \begin{equation} \label{eq: eigval_monotone}
        r_t^\pi = r_\tau^\pi \cdot r_{t-\tau}^\pi \leq r_\tau^\pi
    \end{equation}
    for all $t > \tau > 0$.
    Combining \autoref{eq: eigval_semigroup} and \autoref{eq: eigval_monotone} gives
    \begin{equation}
        r_t^\pi = e^{-\gamma^\pi t},\quad \forall t > 0
    \end{equation}
    for some $\gamma^\pi \geq 0$.
\end{proof}

\subsection{Proof of \autoref{thm: power iteration convergence}}
\label{proof: power iteration convergence}
\begin{proof}
    While $\psi^\pi$ is the only eigenfunction up to scale corresponding to eigenvalue $r_t^\pi$, we can consider the Riesz projector $P$ defined as
    \begin{equation}
        P = -\frac{1}{2\pi \mathrm{i}} \oint_\Gamma (T_t^\pi - z \cdot \operatorname{Id})^{-1} dz,
    \end{equation}
    where $\mathrm{i}=\sqrt{-1}$ is the imaginary unit and the integral is done along the positively-oriented contour $\Gamma$ of a region $G \subset \mathbb{C}$ which contains $r_t^\pi$ but no other element of $\boldsymbol{\sigma}(T_t^\pi)$.
    Since the eigenvalue $r_t^\pi$ has multiplicity $1$, this $P$ is the projection onto the one-dimensional subspace of $\mathcal{D}$ spanned by the unique eigenvector $\psi_t^\pi$, so that $P^2 = P$ and $P\psi^\pi = \psi^\pi$.
    Now, we can decompose $T_t^\pi$ to
    \begin{equation}
        T_t^\pi = r_t^\pi P + N
    \end{equation}
    with $N$ satisfying $NP=PN=0$, and $\boldsymbol{\sigma}(N) = \boldsymbol{\sigma}(T_t^\pi) \setminus \{r_t^\pi\}$, implying that $\boldsymbol{r}(N) = \rho$.
    
    Let the initial guess for \autoref{algo: power iteration, policy fixed} be $\psi_0 \in \mathcal{D}$. The function $\psi_n$, $\psi$ after $n$ steps of iteration, is
    \begin{equation} \label{eq: rate of convergence N}
        \psi_n = \frac{(T_t^\pi)^n \psi_0}{\norm{(T_t^\pi)^n \psi_0}} = \frac{P\psi_0 + \frac{N^n}{(r_t^\pi)^n} \psi_0}{\norm{P\psi_0 + \frac{N^n}{(r_t^\pi)^n} \psi_0}}
    \end{equation}
    which will converge pointwise to $P\psi_0 / \norm{P\psi_0}$, which is the dominant eigenfunction, unless $P\psi_0 = 0$. It can be easily seen from \autoref{eq: rate of convergence N} that the rate of convergence is not slower than $\boldsymbol{r}(N) / r_t^\pi = \rho/r_t^\pi$.
\end{proof}

\bibliography{references}

\begin{IEEEbiography}[{\includegraphics[width=1in,height=1.25in,clip,keepaspectratio]{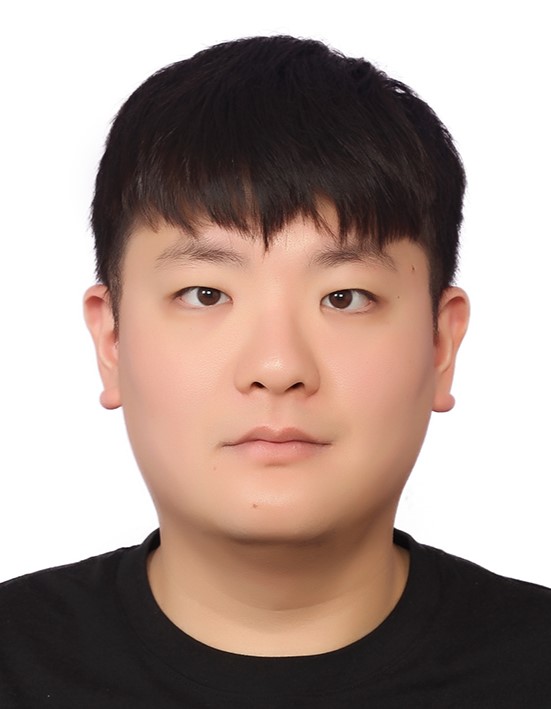}}]{Inkyu Jang} (Graduate Student Member, IEEE) received the B.S. degree in mechanical engineering from Seoul National University, Seoul, Korea, in 2020. He is currently a Ph.D. candidate in Aerospace Engineering at Seoul National University, Seoul, Korea. His research interests include safety-critical control and stochastic control and their connection with machine learning, with applications to robotics.
\end{IEEEbiography}

\begin{IEEEbiography}[{\includegraphics[width=1in,clip,keepaspectratio]{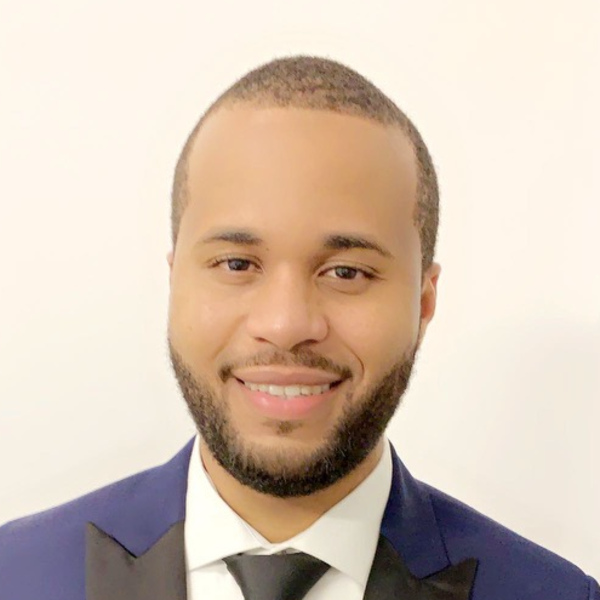}}]{Chams E. Mballo} received the Ph.D. in aerospace engineering (2022) and an M.S. in mathematics (2021) from the Georgia Institute of Technology, Atlanta. He is currently a Postdoctoral Fellow at the University of California, Berkeley. His research focuses on safety-critical control, reachability analysis, human–machine interaction, and sustainable aviation for next-generation electric aircraft.
\end{IEEEbiography}

\begin{IEEEbiography}[{\includegraphics[width=1in,clip,keepaspectratio]{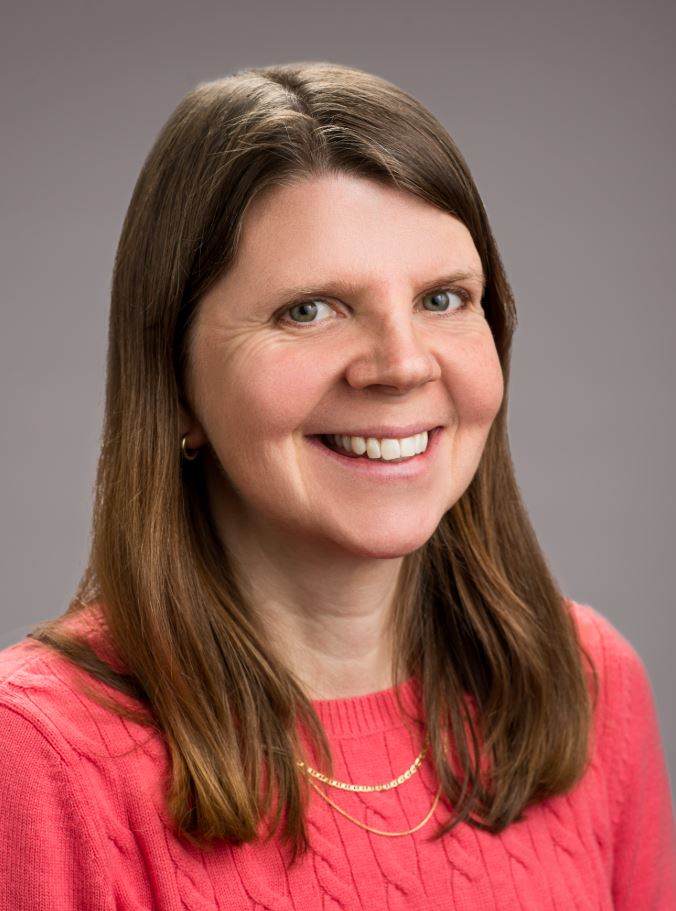}}]{Claire J. Tomlin}  (Fellow, IEEE) received the Ph.D. degree in electrical engineering and computer science from the University of California, Berkeley, CA, USA, in 1998. She is the James and Katherine Lau Professor of Engineering and the Professor and Chair with the Department of Electrical Engineering and Computer Sciences at UC Berkeley, Berkeley, CA, USA. From 1998 to 2007, she was an Assistant, Associate, and Full Professor in Aeronautics and Astronautics at Stanford University, Stanford, CA. In 2005, she joined UC Berkeley. Her research interests include control theory and hybrid systems, with applications to air traffic management, UAV systems, energy, robotics, and systems biology.

Prof. Tomlin is a MacArthur Foundation Fellow (2006). She was the recipient of the IEEE Transportation Technologies Award in 2017. In 2019, she was elected to the National Academy of Engineering and the American Academy of Arts and Sciences.
\end{IEEEbiography}

\begin{IEEEbiography}[{\includegraphics[width=1in,clip,keepaspectratio]{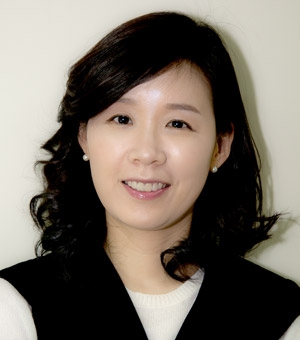}}]{H. Jin Kim} (Member, IEEE) received the B.S. degree from Korea Advanced Institute of Technology, Daejeon, Korea, in 1995, and the M.S. and Ph.D. degrees in mechanical engineering from the University of California at Berkeley (UC Berkeley), Berkeley, CA, USA, in 1999 and 2001, respectively. 

From 2002 to 2004, she was a Postdoctoral Researcher with the Department of Electrical Engineering and Computer Science, UC Berkeley. In September 2004, she joined the Department of Mechanical and Aerospace Engineering, Seoul National University, Seoul, Korea, as an Assistant Professor, where she is currently a Professor. Her research interests include intelligent control of robotic systems and motion planning.
\end{IEEEbiography}

\end{document}